\newtheorem{theorem}{\bf{Theorem}}
\newtheorem{lemma}{\bf{Lemma}}
\newtheorem{proposition}{\bf{Proposition}}
\newtheorem{conjecture}{\bf{Conjecture}}
\newtheorem{remark}{\bf{Remark}}
\begin{document}

\title{Performance Analysis for Polar Codes under Successive Cancellation List Decoding with Fixed List Size}
\author{Jinnan Piao, Dong Li, Xueting Yu, Zhibo Li, Ming Yang, Jindi Liu, and Peng Zeng
\thanks{
This work is supported in part by the National Key R\&D Program of China under Grant 2022YFB3207400, in part by the National Natural Science Foundation of China under Grant 62201562 and in part by the Fundamental Research Project of SIA under Grant 2022JC1K08. (\emph{Corresponding Author: Dong Li})
}
\thanks{
The authors are with the State Key Laboratory of Robotics, Shenyang Institute of Automation, Chinese Academy of Sciences, Shenyang 110016, China, with the Key Laboratory of Networked Control Systems,  Chinese Academy of Sciences, Shenyang 110016, China, and also with the Institutes for Robotics and Intelligent Manufacturing, Chinese Academy of Sciences, Shenyang 110169, China.
(e-mail: piaojinnan@sia.cn; lidong@sia.cn; yuxueting@sia.cn; lizhibo@sia.cn; yangming@sia.cn; liujindi@sia.cn; zp@sia.cn).
}
}

\maketitle
\begin{abstract}

In this paper, we first indicate that the block error event of polar codes under successive cancellation list (SCL) decoding is composed of path loss (PL) error event and path selection (PS) error event, where the PL error event is that correct codeword is lost during the SCL decoding and the PS error event is that correct codeword is reserved in the decoded list but not selected as the decoded codeword. Then, we simplify the PL error event by assuming the all-zero codeword is transmitted and derive the probability lower bound via the joint probability density of the log-likelihood ratios of information bits. Meanwhile, the union bound calculated by the minimum weight distribution is used to evaluate the probability of the PS error event. With the performance analysis, we design a greedy bit-swapping (BS) algorithm to construct polar codes by gradually swapping information bit and frozen bit to reduce the performance lower bound of SCL decoding. The simulation results show that the BLER performance of SCL decoding is close to the lower bound in the medium to high signal-to-noise ratio region and we can optimize the lower bound to improve the BLER performance of SCL decoding by the BS algorithm.

\end{abstract}

\begin{IEEEkeywords}
Polar codes, successive cancellation list decoding, performance analysis, lower bound, bit-swapping construction algorithm.
\end{IEEEkeywords}

\IEEEpeerreviewmaketitle

\section{Introduction}

\subsection{Relative Research}

\IEEEPARstart{P}{olar} codes, invented by Ar{\i}kan, have been proved to achieve the capacity of arbitrary binary-input discrete memoryless channels (B-DMCs) as the code length goes to infinity with the successive cancellation (SC) decoding \cite{arikan}.
However, the performance of polar codes is weak at short to moderate code lengths under SC decoding. To improve the block error rate (BLER) performance, the successive cancellation list (SCL) decoding \cite{niuscl,talvardyscl} is proposed.
Since polar codes demonstrate advantages in error performance and other attractive application prospects, they have been adopted as the coding scheme for the control channel of the enhanced Mobile Broadband (eMBB) service category in the fifth generation wireless communication systems (5G) \cite{3GPP_5G_polar}.

To analyze the BLER performance of polar codes, the performance upper bound of SC decoding is proposed in \cite{arikan}, which is the sum of the error probabilities of all the information bits.
The Bhattacharyya parameter \cite{arikan} is first used to calculate the error probability of information bit for the binary erasure channel (BEC).
For other B-DMCs, the density evolution (DE) algorithm, initially proposed in \cite{DE} and improved in \cite{TalVardy}, tracks the probability distribution of the log-likelihood ratio (LLR) of each synthetic channel and accurately calculates the error probability of each information bit with a high computational cost.
For the binary-input additive white Gaussian noise (BI-AWGN) channel, the Gaussian approximation (GA) algorithm \cite{GA,GA_DAI} approximates the probability distribution of the LLR as Gaussian distribution and gives accurate error probability evaluation with limited complexity.
For fading channels, the average mutual information (AMI) equivalence \cite{ConstructionRayleighFading} finds an equivalent BI-AWGN channel of the given Rayleigh fading channel with the identical AMI to apply GA algorithm to calculate the error probability of information bit.
The above SC upper bound coincides with the performance of polar codes under SC decoding in the medium to high signal-to-noise ratio (SNR) region.
However, there is a large gap between the SC upper bound and the performance of SCL decoding.

For the performance analysis of SCL decoding, the theoretical question: ``How large should the list be to achieve maximum likelihood (ML) decoding'' is motivate by the property that the SCL decoding with limited list size can achieve nearly the ML performance.
To solve the question, Hashemi \emph{et al.} \cite{ReqiredListBEC} first prove that the required list size for an $\left(N, K\right)$ polar code in BEC is $L = 2^{K-\tau}$, where $\tau$ is the number of information bits that appear after the last frozen bit.
Then, Fazeli \emph{et al.} \cite{ReqiredListBDMC} extend the theory and prove that the required list size in B-DMCs is $L = 2^{\gamma}$, where $\gamma$ is the mixing factor.
Meanwhile, Co\c{s}kun \emph{et al.} \cite{InformationSCL} provide the upper and lower bound of the average required list size achieving the ML decoding by the entropies of the synthetic channels transmitting information bits.
However, these works are unrelated to the performance evaluation of SCL decoding with fixed list size.

\subsection{Motivation}

Since the SCL decoding with fixed list size is widely used in practice, there is another theoretical question: ``How to evaluate the performance of SCL decoding with fixed list size''. The SC upper bound can be regarded as the performance upper bound of SCL decoding with $L = 1$. Nevertheless, theory currently falls short of evaluating the performance of SCL decoding with $L \ge 2$. Thus, we try to extend the analysis process of SC upper bound to match the decoded process of SCL decoding and analyse the performance.

\subsection{Main Contributions}

In this paper, we analyse the BLER performance of polar codes under SCL decoding with fixed list size. We first indicate the block error event of SCL decoding is composed of path loss (PL) error event and path selection (PS) error event. Then, we derive the probability lower bound of PL error event and use the union bound of ML performance to evaluate the probability of PS error event. Finally, with the performance analysis, we design a greedy bit-swapping (BS) algorithm to improve the BLER performance of polar codes under SCL decoding.

The main contributions of this paper are summarized as follows.

\begin{enumerate}
  \item We first indicate the block error event of polar codes with SCL decoding is composed of PL error event and PS error event, where PL error event is that the correct codeword is lost during the SCL decoding and PS error event is that the correct codeword is reserved in the decoded list after SCL decoding but not selected as the decoded codeword.
      The BLER performance of SCL decoding is the sum of the probabilities of the two error events.
      Then, We simplify the PL error event by assuming the all-zero codeword is transmitted and use the union bound of the ML performance calculated by the minimum weight distribution (MWD) to evaluate the probability of the PS error event.
  \item We derive the probability lower bound of the simplified PL error event via the joint probability density of the LLRs of information bits. The simplified PL error event is first divided into several subevents by the judgement ranges of the LLRs of information bits.
      Then, the lower bound of the probability of each subevent is derived by the joint probability density of the LLRs.
      With the above process, we provide a modified upper bound of the SC decoding, a lower bound of the SCL decoding with $L = 2$, an approximate lower bound of the SCL decoding with $L = 4$ and an approximate performance of the SCL decoding with $L \ge 8$.
  \item We design a greedy BS algorithm to reduce the performance lower bound of SCL decoding by gradually swapping information bit and frozen bit in order to construct polar codes.
      In the construction algorithm, we first initialize the information set by the MWD sequence which is the optimum construction sequence under ML decoding. Then, we decide a bit-searching range by the Hamming weight and swap the information bit and the frozen bit in the range to reduce the lower bound of the SCL decoding.
      When swapping bit can not reduce the lower bound, we enlarge the range and continue swapping bits to reduce the lower bound until the range can not be enlarged.
\end{enumerate}

The simulation results first show that the BLER performance of SCL decoding approaches the lower bound in the medium to high SNR region.
Interestingly, for the polar codes constructed by GA algorithm, the probability of the PL error event is much less than the ML performance, which provides
an explanation for the phenomenon that the BLER performance of polar codes with limited list size can approach the ML performance.
Then, the simulation results also indicate that the lower bound can be
used to evaluate the BLER performance of CRC-polar concatenated codes under SCL decoding.
Finally, the simulation results
illustrate that we can optimize the proposed lower bound to improve the BLER performance of SCL decoding by the BS algorithm.

The remainder of the paper is organized as follows. Section
II describes the preliminaries of polar codes, SCL decoding, upper bound of SC decoding and simplified code tree of polar codes.
In Section III, we provide an outline of the performance analysis of SCL decoding and indicate the composition of the corresponding block error event.
The probability lower bound of the PL error event is derived in Section IV.
Section V describes the proposed greedy BS algorithm.
Section VI shows the lower bound of the BLER performance of SCL decoding and the performance of polar codes constructed by the BS algorithm. Section VII concludes this paper.

\section{Notations and Preliminaries}

\subsection{Notation Conventions}

In this paper, the lowercase letters, e.g., $x$, are used to denote scalars. The bold lowercase letters (e.g., ${\bf{x}}$) are used to denote vectors.
Notation ${{\bf x}_i^j}$ denotes the vector $(x_i,\cdots,x_j)$ and $x_i$ denotes the $i$-th element of ${\bf{x}}$.
The random variables are denoted by sans serif characters e.g., $\mathsf X$.
The sets are denoted by calligraphic characters, e.g., $\cal{X}$, and the notation $|\cal{X}|$ denotes the cardinality of $\cal{X}$.
The bold capital letters, such as $\bf{X}$, are used to denote matrices.
Throughout this paper, we write ${\bf{F}}^{\otimes n}$ to denote the $n$-th Kronecker power of $\bf{F}$.
$\bf 0$ means an all-zero vector.
$\ln \left(  \cdot  \right)$ and $\log \left(  \cdot  \right)$ mean ``base $e$ logarithm'' and ``base 2 logarithm'', respectively.
$y = \max^i_{x \in {\mathcal X}} f(x)$ means that $y$ is the $i$-th largest $f(x)$ for $x \in {\mathcal X}$ and $[\![N]\!]$ represents the set $\left\{1,2,\cdots,N\right\}$.

\subsection{Polar Codes}

For a B-DMC $W:{\mathcal X}\rightarrow{\mathcal Y}$ with input alphabet ${\mathcal X} = \left\{0,1\right\}$, output alphabet ${\mathcal Y}$ and transition probabilities $W\left(y|x\right)$, an $\left(N, K\right)$ polar code transforms $N$ identical copies of $W$ into synthetic channels $W_N^{\left(i\right)}: {\mathcal X}\rightarrow{\mathcal Y}^N \times {\mathcal X}^{i-1}$, $i\in [\![N]\!]$ with the transition probabilities
$W_N^{\left(i\right)}\left(\left.{\bf y}_1^N, {\bf u}_1^{i-1}\right|u_i\right)$.
The transformation depends on the polarization effect of the matrix
${{\bf F} = \left[
\begin{smallmatrix}
1&0\\
1&1
\end{smallmatrix}
\right]}$ and the codeword ${\bf x}_1^N$ of the polar code is calculated by the linear transform ${\bf x}_1^N  = {\bf u}_1^N{\bf G}$, where ${\bf G} = {\bf F}^{\otimes n}$ is the generator matrix and $n = \log N$.
The $N$-length information sequence ${\bf u}_1^N$ is  generated by assigning $u_i \in \left\{0,1\right\}$ to information bit if $i \in {\mathcal A}$ and $u_i$ to frozen bit if $i \in {\mathcal F}$,
where ${\mathcal A}$ is the information set with cardinality $|{\mathcal A}|=K$ including the indices of the $K$ most reliable synthetic channels and ${\mathcal F} = [\![N]\!] \setminus {\mathcal A}$ is the complementary set of ${\mathcal A}$.

Given the channel output vector ${\bf y}_1^N$ and the estimation ${\bf{\widehat u}}_1^{i - 1}$ of ${\bf u}_1^{i-1}$, the SC decoding with the decoding complexity $O(N\log N)$ calculates the LLR
\begin{equation}\label{EqLLR}
\theta_N^{\left(i\right)}\left({\bf y}_1^N, {\bf {\widehat u}}_1^{i-1}\right)
= \ln\frac{W_N^{\left(i\right)}\left(\left.{\bf y}_1^N, {\bf {\widehat u}}_1^{i-1}\right|u_i = 0\right)}{W_N^{\left(i\right)}\left(\left.{\bf y}_1^N, {\bf {\widehat u}}_1^{i-1}\right|u_i = 1\right)}
\end{equation}
and estimates $u_i$ by
\begin{equation}
{\widehat u_i} = \left\{ \begin{array}{l}
0,~~~~~{\rm if}~\theta_N^{\left(i\right)}\left({\bf y}_1^N, {\bf {\widehat u}}_1^{i-1}\right) \ge 0~{\rm and}~i\in{\mathcal A}\\
1,~~~~~{\rm if}~\theta_N^{\left(i\right)}\left({\bf y}_1^N, {\bf {\widehat u}}_1^{i-1}\right) < 0~{\rm and}~i\in{\mathcal A}\\
0,~~~~~{\rm if}~i\in{\mathcal F}.
\end{array} \right.
\end{equation}

\subsection{SCL Decoding}

The SCL decoding converts the SC decoding into a breadth-first search with the list size $L$ under the decoding complexity $O\left(LN\log N\right)$.
Given ${\mathcal U}_{i} \subseteq \left\{0,1\right\}^{i}, i \in [\![N]\!]$ be the subset including the $L$ survival paths at the $i$-th decoding step with $\left|{\mathcal U}_{i}\right| = L$, the probability of the survival path ${\bf {\widehat u}}_1^i \in {\mathcal U}_{i}$ is
\begin{equation}
P\left(\left.{\bf {\widehat u}}_1^i\right|{\bf y}_1^N\right) = \prod\limits_{j = 1}^i {P\left( {\left. {{{\widehat u}_j}} \right|{\bf y}_1^N,{\bf{\widehat u}}_1^{j - 1}} \right)}.
\end{equation}
When $u_{i}$ is an information bit, the $L$ survival paths in ${\mathcal U}_{i-1}$ are split into $2L$ paths with attempting calculating $P(\left.{\widehat {\bf u}}_1^{i-1}, {\widehat u}_{i}=0\right|{\bf y}_1^N)$ and $P(\left.{\widehat {\bf u}}_1^{i-1}, {\widehat u}_{i}=1\right|{\bf y}_1^N)$.
Then, ${\mathcal U}_{i}$ is decided by selecting the $L$ most likely paths from the $2L$ split paths. When $u_{i}$ is frozen bit, the $L$ survival paths in ${\mathcal U}_{i-1}$ are simply extended with the correct frozen bit. After the $N$-th decoding step, the most likely path in ${\mathcal U}_N$ is selected as the decoded path.

\subsection{Performance Upper Bound of SC Decoding}

Given the information set ${\mathcal A} \subseteq [\![N]\!]$, the block error event of the SC decoding, denoted by ${\mathcal E}$, is a union over the events that the first decision error occurs at the $i$-th decoding step.
Hence, we have
\begin{equation}
{\mathcal E} = \bigcup\limits_{i \in {\mathcal A}} {\mathcal D}_i,
\end{equation}
where
\begin{equation}
\begin{aligned}
{\mathcal D}_i &= \left\{ {\bf y}_1^N, {\bf{\widehat u}}_1^{i-1} = {\bf u}_1^{i-1},  {{\widehat u}}_i \ne u_i \right\} \\
&\subset
{\left\{ {\bf y}_1^N, {{\widehat u}}_i \ne u_i \right\}}{= :{\mathcal E}_i}.
\end{aligned}
\end{equation}
Then, since
${\mathcal E} \subset \bigcup\limits_{i \in {\mathcal A}} {\mathcal E}_i$,
the performance of SC decoding is upper bounded as
\begin{equation}\label{EqSCUpperBound}
P\left({\mathcal E}\right)
= \sum_{i \in {\mathcal A}}{P\left({\mathcal D}_i\right)}
\le \sum_{i \in {\mathcal A}}{P\left({\mathcal E}_i\right)}
\le \sum_{i \in {\mathcal A}}{Z\left(W_N^{\left(i\right)}\right)},
\end{equation}
where $Z\left(W_N^{\left(i\right)}\right)$ is the Bhattacharyya parameter of $W_N^{\left(i\right)}$.

In DE/GA algorithm, $P\left({\mathcal E}_i\right)$ is equal to the error probability calculated by the probability density functions $p\left(\theta_N^{\left(i\right)}\left({\bf y}_1^N, {\bf {\widehat u}}_1^{i-1} = {\bf 0}_1^{i-1}\right)\right)$ with the case ${\bf u}_1^N = {\bf 0}_1^N$. By discarding ${\bf y}_1^N$ and ${\bf {u}}_1^{j-1}$ we have
\begin{equation}
P\left({\mathcal E}_i\right) = P\left({\mathsf \Theta}_N^{\left(i\right)} < 0\right) = \int\limits_{- \infty}^0 {p\left(\theta_N^{\left(i\right)}\right)d \theta_N^{\left(i\right)}},
\end{equation}
where ${\mathsf \Theta}_N^{\left(i\right)}$ is the random variable of $\theta_N^{\left(i\right)}$.

\subsection{Simplified Code Tree of Polar Codes}

In this section, we describe the code tree of polar codes in \cite{ImprovedSCLniu} and prune the branches of frozen bit to decrease the number of the levels of code tree from $N$ to $K$.

The SCL decoding can be represented as a breadth-first path search on a perfect binary tree with $N$ levels.
A path from the root node to an $i$-level node represents an information subvector ${\bf u}_1^i \in \left\{0,1\right\}^i$ with the probability
\begin{equation}\label{EqProPathN}
P\left(\left.{\bf u}_1^i\right|{\bf y}_1^N\right) = \prod\limits_{j \in [\![i]\!]}{P\left(\left.u_j\right|{\bf y}_1^N, {\bf u}_1^{j-1}\right)}.
\end{equation}
Then, for the information bits, given the LLR defined in \eqref{EqLLR} and $\frac{P(u_j = 0)}{P(u_j = 1)} = 1$, we have
\begin{equation}\label{EqLLRCodeTree}
\begin{aligned}
\theta_N^{\left(j\right)}\left({\bf y}_1^N, {\bf {u}}_1^{j-1}\right)
&= \ln\frac{W_N^{\left(j\right)}\left(\left.{\bf y}_1^N, {\bf {u}}_1^{j-1}\right|u_j = 0\right)}{W_N^{\left(j\right)}\left(\left.{\bf y}_1^N, {\bf {u}}_1^{j-1}\right|u_j = 1\right)}\\
&= \ln\frac{P\left(\left.u_j = 0\right|{\bf y}_1^N, {\bf {u}}_1^{j-1}\right)}
{P\left(\left.u_j = 1\right|{\bf y}_1^N, {\bf {u}}_1^{j-1}\right)}.
\end{aligned}
\end{equation}
Discarding ${\bf y}_1^N$ and ${\bf {u}}_1^{j-1}$ from $\theta_N^{\left(j\right)}\left({\bf y}_1^N, {\bf {u}}_1^{j-1}\right)$, we can obtain
\begin{equation}
P\left(\left.u_j\right|{\bf y}_1^N, {\bf {u}}_1^{j-1}\right) =
\left\{ \begin{aligned}
&\left(e^{-\left(1-2u_j\right)\theta_N^{\left(j\right)}} + 1\right)^{-1}&,{\rm if}~j \in {\mathcal A},\\
&1&,{\rm if}~j \in {\mathcal F}.
\end{aligned} \right.
\end{equation}
With this, \eqref{EqProPathN} is simplified as
\begin{equation}
P\left(\left.{\bf u}_1^i\right|{\bf y}_1^N\right) = \prod\limits_{j\in\left({\mathcal A}\bigcap [\![i]\!]\right)}
{P\left(\left.u_j\right|{\bf y}_1^N, {\bf u}_1^{j-1}\right)}
\end{equation}
and there are $2^K$ paths in the binary tree with $N$ levels.
Assuming $u_j = 0$ for $j \in {\mathcal F}$, the $j$-level branch with the frozen bit equal to 1 is pruned.
Hence, the code tree of $\left(N, K\right)$ polar code is pruned as a perfect binary tree with $K$ levels and each path from the root node to the $K$-level node represents the information sequence ${\bf u}_1^N = \left({\bf u}_{\mathcal A}, {\bf u}_{\mathcal F} = 0\right)$ with ${\bf u}_{\mathcal A} \in \left\{0,1\right\}^K$.

\begin{figure}[t]
\setlength{\abovecaptionskip}{0.cm}
\setlength{\belowcaptionskip}{-0.cm}
  \centering{\includegraphics[scale=0.9]{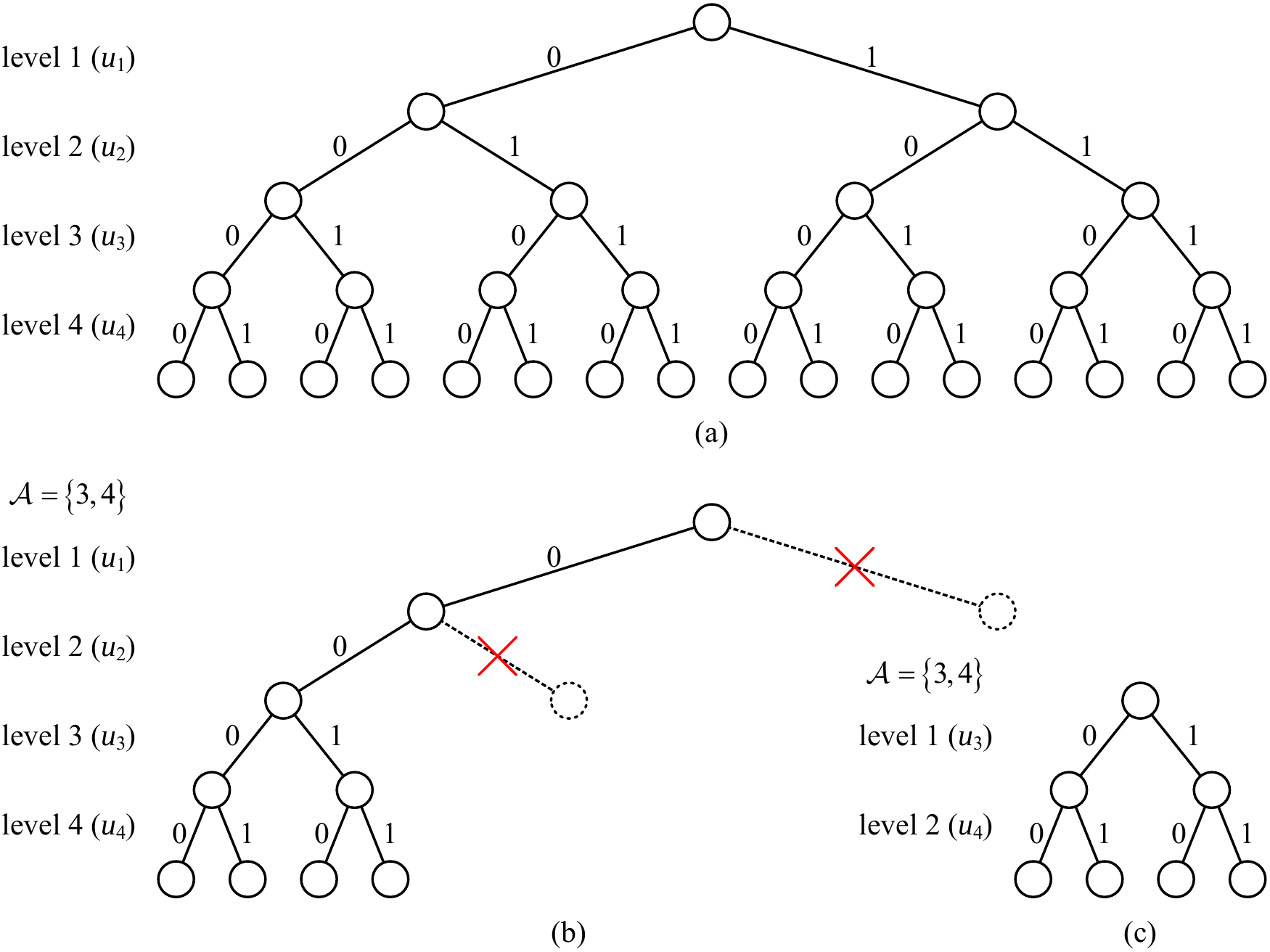}}
  \caption{An example of the code tree with $N = 4$, $K = 2$ and ${\mathcal A} = \left\{3,4\right\}$.}\label{FigCodeTree}
  \vspace{-0em}
\end{figure}

Fig. \ref{FigCodeTree} illustrates an example of the code tree with $N = 4$, $K = 2$ and ${\mathcal A} = \left\{3,4\right\}$. Fig. \ref{FigCodeTree}(a) is a perfect binary tree with $N=4$ levels. Then, pruning the branches with the frozen bit equal to 1 is shown in Fig. \ref{FigCodeTree}(b). After the pruning, we obtain another perfect binary tree with $K=2$ levels in Fig. \ref{FigCodeTree}(c), where level 1 and level 2 are related to $u_3$ and $u_4$, respectively.

\section{An Outline of Performance Analysis of SCL Decoding}

In this section, we first divide the block error event of SCL decoding into the PL error event and the PS error event. Then, we provide the bounds of the block error event, the PL error event and the PS error event.

\subsection{Block Error Event of SCL Decoding}

For an $\left(N, K\right)$ polar code with code rate $R = K/N$, the block error event ${\mathcal E}_{\tt SCL}$ of the SCL decoding is composed of the PL error event ${\mathcal E}_{\tt PL}$ and the PS error event ${\mathcal E}_{\tt PS}$, i.e.,
\begin{equation}
{\mathcal E}_{\tt SCL} = {\mathcal E}_{\tt PL} \bigcup {\mathcal E}_{\tt PS},
\end{equation}
where
${\mathcal E}_{\tt PL}$ is that the correct path is lost during the SCL decoding and ${\mathcal E}_{\tt PS}$ is that the correct path is reserved in the decoded list after SCL decoding but not selected as the decoded codeword by the path metric. Thus, we have
\begin{equation}
P\left({\mathcal E}_{\tt SCL}\right) = P\left({\mathcal E}_{\tt PL}\right) + P\left({\mathcal E}_{\tt PS}\right).
\end{equation}

\subsection{Lower Bound of $P\left({\mathcal E}_{\tt PL}\right)$}

${\mathcal E}_{\tt PL}$ is a union over the events that the correct path is first lost at the $i$-th decoding step, i.e.,
\begin{equation}\label{EqErrorPL}
{\mathcal E}_{\tt PL} = \bigcup\limits_{i \in [\![N]\!]} {\mathcal P}_i,
\end{equation}
where
\begin{equation}\label{EqPiDef}
\begin{aligned}
{\mathcal P}_i
&= \left\{ {\bf y}_1^N,
{\bf u}_1^{j} \in {\mathcal U}_{j}, j\in[\![i-1]\!],  {\bf u}_1^{i} \notin {\mathcal U}_{i}\right\} \\
&= \left\{ {\bf y}_1^N,
{\bf u}_1^{i-1} \in {\mathcal U}_{i-1}, {\bf u}_1^{i} \notin {\mathcal U}_{i}\right\}
\end{aligned}
\end{equation}
and ${\mathcal U}_{i}$ is the list set including the $L$ survival paths at the $i$-th decoding step.

Given $m = \log L$, we have
\begin{equation}\label{EqPiEmptyList}
{\mathcal P}_i = \varnothing, {\rm if}~i\in \left\{a_1,a_2,\cdots,a_m\right\},
\end{equation}
where $a_k$, $k \in [\![K]\!]$, is the $k$-th least element in ${\mathcal A}$, i.e.,
${\mathcal A} = \left\{a_1,a_2,\cdots,a_K\right\}$ with $a_1 < a_2 < \cdots < a_K$.
For the index in the frozen set ${\mathcal F}$, we also have
\begin{equation}\label{EqPiEmptyFrozen}
{\mathcal P}_i = \varnothing, {\rm if}~i\in {\mathcal F}.
\end{equation}
Since the frozen bits have no influence on ${\mathcal E}_{\tt PL}$, we introduce another information vector ${\bf v}_1^K = \left(v_1,v_2,\cdots,v_K\right)$ with $v_k = u_{a_k}$, $k \in [\![K]\!]$.
By \eqref{EqPiEmptyList} and \eqref{EqPiEmptyFrozen}, \eqref{EqErrorPL} is simplified as
\begin{equation}
{\mathcal E}_{\tt PL} = \bigcup\limits_{i \in [\![N]\!]} {\mathcal P}_i =
\bigcup\limits_{k = m+1}^{K} {\mathcal P}_{a_k} = \bigcup\limits_{k = m+1}^{K} {\mathcal Q}_{k},
\end{equation}
where
\begin{equation}
{\mathcal Q}_{k} = \left\{ {\bf y}_1^N,
{\bf v}_1^{k-1} \in {\mathcal V}_{k-1},  {\bf v}_1^{k} \notin {\mathcal V}_{k}\right\} \end{equation}
and ${\mathcal V}_{k}$ is the list set including the $L$ survival paths at the $a_k$-th decoding step.

Then, we introduce the assumption that ${\bf v}_1^K = {\bf 0}_1^K$, which is also used to calculate the upper bound of SC decoding in DE/GA algorithm. With the assumption, the lower bound of $P\left({\mathcal E}_{\tt PL}\right)$ is derived in Lemma \ref{LemmaLowerBoundOutline}.

\begin{lemma}\label{LemmaLowerBoundOutline}
Assuming ${\bf v}_1^K = {\bf 0}_1^K$, the lower bound of $P\left({\mathcal E}_{\tt PL}\right)$ is
\begin{equation}\label{EqPplDef}
P\left({\mathcal E}_{\tt PL}\right) =  \sum\limits_{k = m+1}^{K} P\left({\mathcal Q}_{k}\right) \ge \sum\limits_{k = m+1}^{K} P\left({\mathcal S}_{k}\right),
\end{equation}
where
\begin{equation}
{\mathcal S}_{k} = \left\{ {\bf y}_1^N,
{\mathcal V}_{k-1} = {\mathcal Z}_{k-1}^m,  {\bf 0}_1^{k} \notin {\mathcal V}_{k}\right\}
\end{equation}
is a special case of ${\mathcal Q}_{k}$ with
\begin{equation}
{\mathcal Z}_{k-1}^m=
{\left\{ \left. \left({\bf {\widehat v}}_1^{k-m-1} = {\bf 0}_1^{k-m-1}, {\bf {\widehat v}}_{k-m}^{k-1}\right) \right| {\bf {\widehat v}}_{k-m}^{k-1} \in \left\{0,1\right\}^m \right\}}
\end{equation}
which includes $2^m$ survival paths where the first $\left(k-m-1\right)$ information bits are equal to 0 and the subsequent $m$ information bits are ergodic.
\end{lemma}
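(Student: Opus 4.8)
The plan is to prove the two assertions in \eqref{EqPplDef} separately: first the additive identity $P(\mathcal{E}_{\tt PL})=\sum_{k=m+1}^{K}P(\mathcal{Q}_k)$, and then the term-by-term lower bound $P(\mathcal{Q}_k)\ge P(\mathcal{S}_k)$. For the identity I would upgrade the union decomposition $\mathcal{E}_{\tt PL}=\bigcup_{k=m+1}^{K}\mathcal{Q}_k$ (already established) to a \emph{disjoint} union by showing the $\mathcal{Q}_k$ are pairwise disjoint; for the bound I would exhibit $\mathcal{S}_k$ as a sub-event of $\mathcal{Q}_k$, invoke monotonicity of probability, and sum over $k$.

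For the disjointness claim, note that the decoder is deterministic given $\mathbf{y}_1^N$, so each $\mathcal{Q}_k$ is merely a subset of $\mathcal{Y}^N$. The structural fact I would use is that SCL only ever extends surviving paths: every path in $\mathcal{V}_k$ is obtained by splitting a path of $\mathcal{V}_{k-1}$ (the intervening frozen positions merely append zeros), so the length-$(k-1)$ prefix of any path in $\mathcal{V}_k$ lies in $\mathcal{V}_{k-1}$. Consequently, if $\mathbf{v}_1^{k}\notin\mathcal{V}_k$ then $\mathbf{v}_1^{k'-1}\notin\mathcal{V}_{k'-1}$ for every $k'>k$, so the defining condition ``$\mathbf{v}_1^{k'-1}\in\mathcal{V}_{k'-1}$'' of $\mathcal{Q}_{k'}$ fails; hence $\mathcal{Q}_k\cap\mathcal{Q}_{k'}=\varnothing$ for $k\ne k'$. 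This is the SCL counterpart of the ``first decision error'' disjointness that yields $P(\mathcal{E})=\sum_i P(\mathcal{D}_i)$ in \eqref{EqSCUpperBound}, and it gives the claimed identity.

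For the inclusion $\mathcal{S}_k\subseteq\mathcal{Q}_k$, I would first verify that $\mathcal{Z}_{k-1}^m$ is an admissible list state: it is a set of $2^m=L$ distinct length-$(k-1)$ paths and it contains $\mathbf{0}_1^{k-1}$ (take the ergodic suffix $\widehat{\mathbf v}_{k-m}^{k-1}=\mathbf{0}$). Under the standing assumption $\mathbf{v}_1^K=\mathbf{0}_1^K$ one has $\mathbf{v}_1^{k-1}=\mathbf{0}_1^{k-1}$ and $\mathbf{v}_1^{k}=\mathbf{0}_1^{k}$. Therefore, on $\mathcal{S}_k$ the condition $\mathcal{V}_{k-1}=\mathcal{Z}_{k-1}^m$ forces $\mathbf{v}_1^{k-1}\in\mathcal{V}_{k-1}$, while $\mathbf{0}_1^{k}\notin\mathcal{V}_k$ is literally $\mathbf{v}_1^{k}\notin\mathcal{V}_k$; both defining conditions of $\mathcal{Q}_k$ hold, so $\mathcal{S}_k\subseteq\mathcal{Q}_k$ and $P(\mathcal{S}_k)\le P(\mathcal{Q}_k)$. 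Summing over $k=m+1,\dots,K$ completes the argument.

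The conceptual content here is light, and I expect the only real care to be bookkeeping: making sure the index $k$ consistently refers to the list state \emph{after} processing the $k$-th information bit $v_k=u_{a_k}$, that the frozen positions have been correctly factored out via \eqref{EqPiEmptyFrozen}, and that the reduction ``$\mathcal{Q}_k=\varnothing$ for $k\le m$'' has indeed trimmed the union to $k\in\{m+1,\dots,K\}$. The substantive difficulty is deliberately postponed: the whole point of weakening $\mathcal{Q}_k$ to $\mathcal{S}_k$ is that, once $\mathcal{V}_{k-1}$ is pinned to the explicit set $\mathcal{Z}_{k-1}^m$, the event $\mathbf{0}_1^{k}\notin\mathcal{V}_k$ can be translated into an ordering condition on the candidate path metrics at the $a_k$-th step, and hence evaluated via the joint probability density of the information-bit LLRs — which is the task of Section IV.
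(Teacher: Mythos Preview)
Your proposal is correct and follows the same route as the paper: the paper's proof is the single line ``Due to ${\mathcal S}_k\subset{\mathcal Q}_k$, the proof of Lemma~\ref{LemmaLowerBoundOutline} is clear,'' so you are simply filling in the bookkeeping (disjointness of the ${\mathcal Q}_k$ for the equality, and the explicit check that ${\mathcal Z}_{k-1}^m\ni{\bf 0}_1^{k-1}$ for the inclusion) that the paper leaves implicit.
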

\begin{proof}
Due to ${\mathcal S}_{k} \subset {\mathcal Q}_{k}$, the proof of Lemma \ref{LemmaLowerBoundOutline} is clear.
\end{proof}

Furthermore, $P\left({\mathcal S}_{k}\right)$ is represented as
\begin{equation} \label{EqPSkLowerBound}
\begin{aligned}
&P\left({\mathcal S}_{k}\right)\\
&\overset{(a)}{=} P\left(\left.{\bf 0}_1^{k} \notin {\mathcal V}_{k}\right|{\bf y}_1^N,
{\mathcal V}_{k-1} = {\mathcal Z}_{k-1}^m\right)P\left( {\bf y}_1^N,
{\mathcal V}_{k-1} = {\mathcal Z}_{k-1}^m \right)   \\
&\overset{(b)}{=} P\left(\left.{\bf 0}_1^{k} \notin {\mathcal V}_{k}\right|{\bf y}_1^N,
{\mathcal V}_{k-1} = {\mathcal Z}_{k-1}^m\right)
\underbrace{
P\left( \left.\left\{{\bf {\widehat v}}_{k-m}^{k-1} \in \left\{0,1\right\}^m \right\} \right| {\bf y}_1^N, {\bf {\widehat v}}_1^{k-m-1} = {\bf 0}_1^{k-m-1} \right)
}_{=1}
P\left( {\bf y}_1^N, {\bf {\widehat v}}_1^{k-m-1} = {\bf 0}_1^{k-m-1} \right) \\
&\overset{(c)}{=} P\left(\left.{\bf 0}_1^{k} \notin {\mathcal V}_{k}\right|{\bf y}_1^N,
{\mathcal V}_{k-1} = {\mathcal Z}_{k-1}^m\right)
\prod_{j=1}^{k-m-1}
{P\left({\bf y}_1^N, \left.{\widehat v}_j = 0 \right|  {\bf {\widehat v}}_1^{j-1} = {\bf 0}_1^{j-1}\right)}\\
&\overset{(d)}{=} P\left(\left.{\bf 0}_1^{k} \notin {\mathcal V}_{k}\right|{\bf y}_1^N, {\mathcal V}_{k-1} = {\mathcal Z}_{k-1}^m\right)
\prod_{j=1}^{k-m-1}{P\left({\mathsf L}_j \ge 0\right)} \\
&\overset{(e)}{=}\left(1 - P\left(\underbrace{\left.{\bf 0}_1^{k} \in {\mathcal V}_{k}\right|{\bf y}_1^N, {\mathcal V}_{k-1} = {\mathcal Z}_{k-1}^m}_{=:{\mathcal C}_k}\right)\right)
\prod_{j=1}^{k-m-1}{\int\limits_{0}^{\infty} {p\left(l_j\right)d l_j}}\\
&\overset{(f)}{=}\left(1 - P\left({\mathcal C}_k\right)\right)
\prod_{j=1}^{k-m-1}{\int\limits_{0}^{\infty} {p\left(l_j\right)d l_j}}.
\end{aligned}
\end{equation}
In \eqref{EqPSkLowerBound}, $(a)$, $(b)$ and $(c)$ are derived by Bayes formula,
where $P\left( \left.\left\{{\bf {\widehat v}}_{k-m}^{k-1} \in \left\{0,1\right\}^m \right\} \right| {\bf y}_1^N, {\bf {\widehat v}}_1^{k-m-1} = {\bf 0}_1^{k-m-1} \right)$ in $(b)$ is equal to 1,
since all the $m$ bits in ${\bf {\widehat v}}_{k-m}^{k-1}$ are ergodic for the $L = 2^m$ paths of ${\bf{\widehat v}}_1^{k-1}$ when ${\bf {\widehat v}}_1^{k-m-1} = {\bf 0}_1^{k-m-1}$.
To derive $(d)$ and $(e)$, given $\frac{P\left({\widehat v}_j = 0\right)}{P\left({\widehat v}_j = 1\right)} = 1$ and
the random variable ${\mathsf L}_j$ of the LLR $l_j$ defined as
\begin{equation}\label{EqDeflj}
\begin{aligned}
l_j
    &=\ln\frac
        {p\left({\bf y}_1^N, \left.{v}_j = 0 \right|  {\bf {\widehat v}}_1^{j-1} = {\bf 0}_1^{j-1}\right)}
        {p\left({\bf y}_1^N, \left.{v}_j = 1 \right|  {\bf {\widehat v}}_1^{j-1} = {\bf 0}_1^{j-1}\right)}\\
    &=\ln\frac
        {W_N^{\left({a_j}\right)}\left(\left.{\bf y}_1^N, {\bf {\widehat u}}_1^{{a_j}-1}= {\bf 0}_1^{a_j-1}\right|u_{a_j} = 0\right)}
        {W_N^{\left({a_j}\right)}\left(\left.{\bf y}_1^N, {\bf {\widehat u}}_1^{{a_j}-1}= {\bf 0}_1^{a_j-1}\right|u_{a_j} = 1\right)}\\
    &= \theta_N^{\left({a_j}\right)}\left({\bf y}_1^N, {\bf {\widehat u}}_1^{{a_j}-1}= {\bf 0}_1^{a_j-1}\right)
\end{aligned}
\end{equation}
with the probability density function $p\left(l_j\right)$, we have
\begin{equation}
P\left({\bf y}_1^N, \left.{\widehat v}_j = 0 \right|  {\bf {\widehat v}}_1^{j-1} = {\bf 0}_1^{j-1}\right) = P\left({\mathsf L}_j \ge 0\right) = \int\limits_{0}^{\infty} {p\left(l_j\right)d l_j},
\end{equation}
where $p\left(l_j\right) = p\left(\theta_N^{\left({a_j}\right)}\right)$ can be obtained by DE/GA algorithm.
Then, in $(e)$ and $(f)$, given ${\mathcal C}_k := \left\{\left.{\bf 0}_1^{k} \in {\mathcal V}_{k}\right|{\bf y}_1^N, {\mathcal Z}_{k-1}^m\right\}$,
we replace $P\left(\left.{\bf 0}_1^{k} \notin {\mathcal V}_{k}\right|{\bf y}_1^N, {\mathcal Z}_{k-1}^m\right)$ with $\left(1 - P\left({\mathcal C}_k\right)\right)$.
In Section IV, the upper bound of $P\left({\mathcal C}_k\right)$ is derived to obtain the lower bound of $P\left({\mathcal E}_{\tt PL}\right)$.

\subsection{Approximation of $P\left({\mathcal E}_{\tt PS}\right)$}

${\mathcal E}_{\tt PS}$ is represented as
\begin{equation}
\begin{aligned}
{\mathcal E}_{\tt PS}
    &=\left\{
        \begin{aligned}
        {\bf y}_1^N, {\bf v}_1^K \in {\mathcal V}_K,
        \exists {\bf{\widehat v}}_1^K \in {\mathcal V}_K, P\left(\left.{\bf v}_1^K\right|{\bf y}_1^N\right)<P\left(\left.{\bf {\widehat v}}_1^K\right|{\bf y}_1^N\right)
        \end{aligned}
    \right\}\\
    &\subset \left\{
        {\bf y}_1^N, \exists {\bf{\widehat v}}_1^K \in \left\{0,1\right\}^K, P\left(\left.{\bf v}_1^K\right|{\bf y}_1^N\right)<P\left(\left.{\bf {\widehat v}}_1^K\right|{\bf y}_1^N\right)
    \right\}.
\end{aligned}
\end{equation}
Hence, $P\left({\mathcal E}_{\tt PS}\right)$ is bounded by the ML performance $P_{\tt ML}$, i.e.,
\begin{equation}
P\left({\mathcal E}_{\tt PS}\right) \le P_{\tt ML}.
\end{equation}
Then, since $P\left({\mathcal E}_{\tt PS}\right)$ is approximated to $P_{\tt ML}$ for $L \ge 2$ in the high SNR region, we have
\begin{equation}\label{EqPpsML}
P\left({\mathcal E}_{\tt PS}\right) \approx P_{\tt ML}, {\rm if}~L\ge2~{\rm and}~\frac{E_s}{N_0}\gg 1.
\end{equation}

\subsection{Lower Bound of $P\left({\mathcal E}_{\tt SCL}\right)$}

The lower bound of $P\left({\mathcal E}_{\tt SCL}\right)$ is provided in Theorem \ref{TheoremLowerBoundOutline}.
\begin{theorem}\label{TheoremLowerBoundOutline}
The lower bound of $P\left({\mathcal E}_{\tt SCL}\right)$ is
\begin{equation}\label{EqSCLLowerBoundOutline}
\begin{aligned}
P\left({\mathcal E}_{\tt SCL}\right)
= P\left({\mathcal E}_{\tt PL}\right) + P\left({\mathcal E}_{\tt PS}\right)
\gtrsim \sum\limits_{k = m+1}^{K} P\left({\mathcal S}_{k}\right) + P_{\tt ML}.
\end{aligned}
\end{equation}
\end{theorem}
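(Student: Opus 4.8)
The plan is to assemble the statement from three ingredients that are already in place by this point in the paper: the disjoint decomposition of the block error event into ${\mathcal E}_{\tt PL}$ and ${\mathcal E}_{\tt PS}$, the genuine lower bound $P({\mathcal E}_{\tt PL}) \ge \sum_{k=m+1}^{K} P({\mathcal S}_k)$ from Lemma \ref{LemmaLowerBoundOutline}, and the approximation $P({\mathcal E}_{\tt PS}) \approx P_{\tt ML}$ from \eqref{EqPpsML}. So the proof is a short chaining argument; the only care needed is to keep straight which steps are exact equalities, which are exact inequalities, and which is the approximation responsible for the symbol $\gtrsim$.

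First I would recall the exact identity $P({\mathcal E}_{\tt SCL}) = P({\mathcal E}_{\tt PL}) + P({\mathcal E}_{\tt PS})$. The events ${\mathcal E}_{\tt PL}$ (the correct path is discarded at some decoding step) and ${\mathcal E}_{\tt PS}$ (the correct path survives into ${\mathcal V}_N$ but a competing path has a larger metric) are mutually exclusive, since on ${\mathcal E}_{\tt PS}$ the correct path is by definition present in the final list while on ${\mathcal E}_{\tt PL}$ it is not; and their union is exactly the block error event, because a block error occurs iff the decoded path differs from ${\bf v}_1^K$, i.e. iff the correct path was lost or it was retained but not selected. Additivity of probability over disjoint events then gives the equality, which is precisely the content of the opening of Section III and can simply be cited.

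Second, I would invoke Lemma \ref{LemmaLowerBoundOutline}. Under the all-zero assumption ${\bf v}_1^K = {\bf 0}_1^K$, the events $\{{\mathcal P}_i\}$ — and, after discarding the frozen indices and the first $m$ information indices via \eqref{EqPiEmptyList}–\eqref{EqPiEmptyFrozen}, the events $\{{\mathcal Q}_k\}_{k=m+1}^{K}$ — are pairwise disjoint, since ${\mathcal P}_i$ encodes that the correct path is \emph{first} lost at step $a_k$; hence $P({\mathcal E}_{\tt PL}) = \sum_{k=m+1}^{K} P({\mathcal Q}_k)$. Because ${\mathcal S}_k \subset {\mathcal Q}_k$ for every $k$, monotonicity of probability yields $P({\mathcal E}_{\tt PL}) \ge \sum_{k=m+1}^{K} P({\mathcal S}_k)$. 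Combining this with the identity from the first step,
\begin{equation*}
P({\mathcal E}_{\tt SCL}) = P({\mathcal E}_{\tt PL}) + P({\mathcal E}_{\tt PS}) \ge \sum_{k=m+1}^{K} P({\mathcal S}_k) + P({\mathcal E}_{\tt PS}),
\end{equation*}
and then replacing $P({\mathcal E}_{\tt PS})$ by $P_{\tt ML}$ according to \eqref{EqPpsML} gives the claimed relation $P({\mathcal E}_{\tt SCL}) \gtrsim \sum_{k=m+1}^{K} P({\mathcal S}_k) + P_{\tt ML}$.

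The one real subtlety — conceptual rather than computational — is to be transparent about the status of the bound, and this is where I expect the main obstacle (to rigor, not to length) to lie. The term $\sum_{k} P({\mathcal S}_k)$ is a true lower bound on $P({\mathcal E}_{\tt PL})$ once the all-zero-codeword reduction is accepted, whereas $P_{\tt ML}$ is only an approximation of $P({\mathcal E}_{\tt PS})$, which is itself \emph{upper} bounded (not lower bounded) by $P_{\tt ML}$; so the final symbol must be the approximate inequality $\gtrsim$ rather than $\ge$, valid for $L \ge 2$ and $E_s/N_0 \gg 1$. The theorem should therefore be read as: the true BLER equals $P({\mathcal E}_{\tt PL}) + P({\mathcal E}_{\tt PS})$, is at least $\sum_{k} P({\mathcal S}_k)$ in its first summand, and has a second summand well approximated by $P_{\tt ML}$ at medium-to-high SNR — with the remaining work deferred to Section IV, namely making $\sum_{k} P({\mathcal S}_k)$ explicit by upper bounding $P({\mathcal C}_k)$ in \eqref{EqPSkLowerBound} and estimating $P_{\tt ML}$ by the union bound over the minimum weight distribution.
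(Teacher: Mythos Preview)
Your proposal is correct and follows exactly the paper's own approach: the paper's proof is a one-line citation of Lemma~\ref{LemmaLowerBoundOutline} and \eqref{EqPpsML}, which is precisely the chaining you describe. Your additional care in distinguishing the exact inequality on $P({\mathcal E}_{\tt PL})$ from the approximation on $P({\mathcal E}_{\tt PS})$ is welcome and more explicit than the paper itself, but the underlying argument is identical.
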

\begin{proof}
The proof is clear by Lemma \ref{LemmaLowerBoundOutline} and \eqref{EqPpsML}.
\end{proof}

\section{Performance Lower Bound of SCL Decoding}

In this section, we first modify the performance upper bound of SC decoding by the performance analysis above. Then, we derive the performance lower bound of SCL decoding.

\subsection{Modified Upper Bound of SC Decoding}

The modified upper bound of SC decoding is provided in Proposition \ref{PropositionModifiedSCBound}.

\begin{proposition}\label{PropositionModifiedSCBound}
Assuming ${\bf v}_1^K = {\bf 0}_1^K$, the modified upper bound of SC decoding is
\begin{equation}\label{EqModifiedUpperBoundSC}
\begin{aligned}
P\left({\mathcal E}_{\tt SC}\right)
= \sum\limits_{k = 1}^{K} P\left({\mathcal S}_{k}\right)
= \sum\limits_{k = 1}^{K} \left(P\left({\mathsf L}_k < 0\right)
    \prod_{j=1}^{k-1}{P\left({\mathsf L}_j \ge 0\right)}\right).
\end{aligned}
\end{equation}
\end{proposition}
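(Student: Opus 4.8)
The plan is to derive \eqref{EqModifiedUpperBoundSC} as the $L=1$, i.e. $m=\log L = 0$, specialization of the SCL error analysis of Sections III--IV, since SC decoding is exactly SCL decoding with a single survival path. First I would observe that with $L=1$ the PS error event is empty: if the correct path is among the $L=1$ retained paths, it is the unique element of ${\mathcal V}_N$ and is therefore output, so ${\mathcal E}_{\tt PS} = \varnothing$ and hence ${\mathcal E}_{\tt SC} = {\mathcal E}_{\tt PL} = \bigcup_{k=m+1}^{K}{\mathcal Q}_k = \bigcup_{k=1}^{K}{\mathcal Q}_k$.

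Next I would show that for $m=0$ the inequality of Lemma \ref{LemmaLowerBoundOutline} becomes an equality. When $m=0$, the set ${\mathcal Z}_{k-1}^0$ consists of the single path ${\bf {\widehat v}}_1^{k-1} = {\bf 0}_1^{k-1}$, and since $|{\mathcal V}_{k-1}| = L = 1$ the only list containing ${\bf 0}_1^{k-1}$ is ${\mathcal Z}_{k-1}^0$ itself; under the assumption ${\bf v}_1^K = {\bf 0}_1^K$ this gives ${\mathcal Q}_k = {\mathcal S}_k$. Moreover ${\mathcal Q}_k$ is the event that the correct path survives through step $a_{k-1}$ but is lost at step $a_k$, so ${\mathcal S}_1,\dots,{\mathcal S}_K$ are pairwise disjoint; therefore $P({\mathcal E}_{\tt SC}) = \sum_{k=1}^{K}P({\mathcal S}_k)$, which is the first equality in \eqref{EqModifiedUpperBoundSC}.

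The second equality then follows by evaluating \eqref{EqPSkLowerBound} at $m=0$. The middle factor $P(\{{\bf {\widehat v}}_{k-m}^{k-1}\in\{0,1\}^m\}\mid\cdots)$ becomes a vacuous statement equal to $1$, the product over $j$ runs from $1$ to $k-1$, and each factor equals $\int_0^\infty p(l_j)\,dl_j = P({\mathsf L}_j \ge 0)$ by \eqref{EqDeflj}. It then remains to evaluate the leading factor $1-P({\mathcal C}_k)$, where ${\mathcal C}_k = \{{\bf 0}_1^{k}\in{\mathcal V}_k\mid{\bf y}_1^N,\,{\mathcal V}_{k-1} = {\mathcal Z}_{k-1}^0\}$. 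With the list equal to the single path ${\bf 0}_1^{k-1}$, the SCL update at the information position $a_k$ splits it into ${\bf 0}_1^{k-1}0$ and ${\bf 0}_1^{k-1}1$ and retains the single more likely one; by \eqref{EqLLRCodeTree} this retains ${\bf 0}_1^{k}$ iff $\theta_N^{(a_k)}({\bf y}_1^N,{\bf 0}_1^{a_k-1}) = l_k \ge 0$, so ${\mathcal C}_k = \{{\mathsf L}_k \ge 0\}$ and $1-P({\mathcal C}_k) = P({\mathsf L}_k < 0)$. Substituting gives $P({\mathcal S}_k) = P({\mathsf L}_k < 0)\prod_{j=1}^{k-1}P({\mathsf L}_j \ge 0)$, and summing over $k$ yields \eqref{EqModifiedUpperBoundSC}.

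The step I expect to require the most care is the identification ${\mathcal C}_k = \{{\mathsf L}_k \ge 0\}$: one must argue that for a single survival path the list pruning at an information bit reduces exactly to the SC hard decision, and that conditioning on the all-zero list history ${\mathcal V}_{k-1} = {\mathcal Z}_{k-1}^0$ makes the relevant decision LLR coincide with $l_k = \theta_N^{(a_k)}({\bf y}_1^N,{\bf 0}_1^{a_k-1})$ of \eqref{EqDeflj}, so that the density $p(l_k) = p(\theta_N^{(a_k)})$ supplied by the DE/GA algorithm applies. Everything else is a direct bookkeeping specialization of \eqref{EqPSkLowerBound} and Lemma \ref{LemmaLowerBoundOutline} to $m=0$.
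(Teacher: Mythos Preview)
Your proposal is correct and follows essentially the same route as the paper: both specialize the SCL analysis to $L=1$, $m=0$, note that $P({\mathcal E}_{\tt PS})=0$, identify ${\mathcal Q}_k={\mathcal S}_k$, and evaluate $P({\mathcal S}_k)$ from \eqref{EqPSkLowerBound} to obtain $P({\mathsf L}_k<0)\prod_{j=1}^{k-1}P({\mathsf L}_j\ge 0)$. The only cosmetic difference is that the paper writes the leading factor directly as $P\!\left(\left.v_k\ne 0\right|{\bf y}_1^N,{\bf v}_1^{k-1}={\bf 0}_1^{k-1}\right)=P({\mathsf L}_k<0)$, whereas you reach the same conclusion by unwinding ${\mathcal C}_k=\{{\mathsf L}_k\ge 0\}$.
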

\begin{proof}
    SC decoding can be regarded as the SCL decoding with $L = 1$ and $m = 0$.
    Hence, we have
    \begin{equation}\label{EqSCBoundPps}
    P\left({\mathcal E}_{\tt PS}\right) = 0.
    \end{equation}
    Assuming ${\bf v}_1^K = {\bf 0}_1^K$, we have
    \begin{equation}
    {\mathcal Q}_{k} = \left\{ {\bf y}_1^N,
    {\bf v}_1^{k-1} = {\bf 0}_1^{k-1},  {v}_k \neq 0\right\} = {\mathcal S}_{k}.
    \end{equation}
    With \eqref{EqPSkLowerBound}, $P\left({\mathcal S}_{k}\right)$ is
    \begin{equation}\label{EqSCBoundPSk}
    \begin{aligned}
    P\left({\mathcal S}_{k}\right)
        &=P\left(\left.{v}_k \neq 0\right|{\bf y}_1^N, {\bf v}_1^{k-1} = {\bf 0}_1^{k-1}\right)
        \prod_{j=1}^{k-1}{P\left({\mathsf L}_j \ge 0\right)} \\
        &= P\left({\mathsf L}_k < 0\right)
        \prod_{j=1}^{k-1}{P\left({\mathsf L}_j \ge 0\right)}.
    \end{aligned}
    \end{equation}
    Thus, according to \eqref{EqPplDef}, \eqref{EqSCBoundPps} and \eqref{EqSCBoundPSk}, the modified upper bound of SC decoding is
    \begin{equation}
    \begin{aligned}
    P\left({\mathcal E}_{\tt SC}\right)
    &= P\left({\mathcal E}_{\tt PL}\right) + P\left({\mathcal E}_{\tt PS}\right) \\
    &= \sum\limits_{k = 1}^{K} \left(P\left({\mathsf L}_k < 0\right)
        \prod_{j=1}^{k-1}{P\left({\mathsf L}_j \ge 0\right)}\right).
    \end{aligned}
    \end{equation}
\end{proof}

\begin{remark}
Clearly, due to $P\left({\mathcal E}_{\tt SC}\right) \le \sum_{i \in {\mathcal A}}{P\left({\mathcal E}_i\right)}$, $P\left({\mathcal E}_{\tt SC}\right)$ is a tighter upper bound of SC decoding and we have $P\left({\mathcal E}_{\tt SC}\right) \approx \sum_{i \in {\mathcal A}}{P\left({\mathcal E}_i\right)}$ in the high SNR region.
\end{remark}

\subsection{Lower Bound of SCL Decoding with $L = 2$}

By \eqref{EqPplDef}, \eqref{EqPSkLowerBound} and \eqref{EqSCLLowerBoundOutline}, we should derive the upper bound of $P\left({\mathcal C}_k\right)$ to obtain the lower bound of $P\left({\mathcal E}_{\tt PL}\right)$ in order to determine the lower bound of $P\left({\mathcal E}_{\tt SCL}\right)$.

\begin{figure}[t]
\setlength{\abovecaptionskip}{0.cm}
\setlength{\belowcaptionskip}{-0.cm}
  \centering{\includegraphics[scale=0.9]{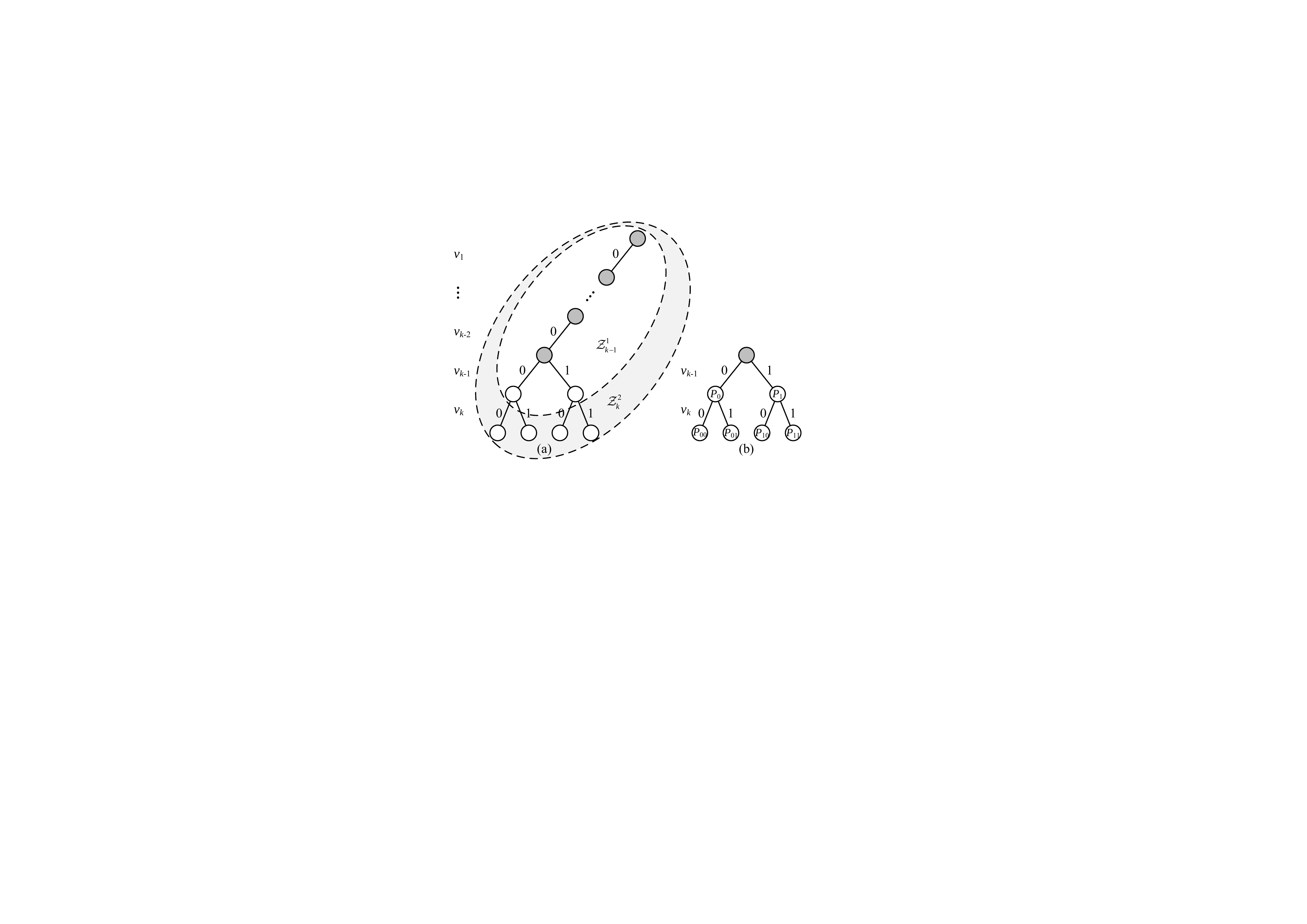}}
  \caption{The subtrees with $L = 2$. Fig. \ref{FigCodeTreeZ}(a) illustrates ${\mathcal Z}_{k-1}^1$ and ${\mathcal Z}_{k}^2$. Fig. \ref{FigCodeTreeZ}(b) shows the $4$ paths $\left(v_{k-1},v_k\right) \in \left\{0,1\right\}^2$ and corresponding path metric by discarding ${\bf v}_{1}^{k-2}$.}\label{FigCodeTreeZ}
  \vspace{-0em}
\end{figure}

For SCL decoding with $L=2$ and $m=1$, we have
\begin{equation}
{\mathcal Z}_{k-1}^1 =
\left\{
\left({\bf 0}_1^{k-2}, v_{k-1} = 0 \right),\left({\bf 0}_1^{k-2}, v_{k-1} = 1 \right)
\right\}
\end{equation}
and
\begin{equation}
{\mathcal Z}_{k}^2 =
\left\{
\begin{aligned}
    \left({\bf 0}_1^{k-2}, v_{k-1}^{k} = (0,0), \right),
    \left({\bf 0}_1^{k-2}, v_{k-1}^{k} = (0,1), \right)\\
    \left({\bf 0}_1^{k-2}, v_{k-1}^{k} = (1,0), \right),
    \left({\bf 0}_1^{k-2}, v_{k-1}^{k} = (1,1), \right)
\end{aligned}
\right\},
\end{equation}
where the $4$ paths in ${\mathcal Z}_{k}^2$ is the extended paths of the $2$ paths in ${\mathcal Z}_{k-1}^1$.
${\mathcal Z}_{k-1}^1$ and ${\mathcal Z}_{k}^2$ are illustrated in Fig. \ref{FigCodeTreeZ}(a).

When ${\mathcal V}_{k-1} = {\mathcal Z}_{k-1}^1$, we have ${\mathcal V}_{k} \subset {\mathcal Z}_{k}^2$ including $2$ paths in ${\mathcal Z}_{k}^2$ with the largest path metric.
With this, ${\mathcal C}_k$ is equivalent to the event that $P\left(\left.{\bf 0}_1^k\right|y_1^N\right)$ is at least the second largest path metric among the paths in ${\mathcal Z}_{k}^2$, i.e.,
\begin{equation}\label{EqSCL2ProbSetInitial}
{\mathcal C}_k = \left\{\left.P\left(\left.{\bf 0}_1^k\right|{\bf y}_1^N\right) \ge \max\nolimits^2_{{\bf v}_1^k \in {\mathcal Z}_{k}^2} P\left(\left.{\bf v}_1^k\right|{\bf y}_1^N\right)  \right| {\bf y}_1^N, {\mathcal V}_{k-1} = {\mathcal Z}_{k-1}^1 \right\}.
\end{equation}
Discarding ${\bf y}_1^N$ and ${\bf v}_1^{k-2}$ from \eqref{EqSCL2ProbSetInitial}, ${\mathcal C}_k$ is simplified as
\begin{equation}
{\mathcal C}_k = \left\{P_{00} \ge \max\nolimits^2\left\{P_{00}, P_{01}, P_{10}, P_{11}\right\}\right\},
\end{equation}
where
\begin{equation}
\begin{aligned}
P_{v_{k-1},v_k}
    &= P\left(\left.v_{k-1},v_k\right|{\bf y}_1^N, {\bf v}_1^{k-2} = {\bf 0}_1^{k-2} \right) \\
    &= P\left(\left.v_{k-1}\right|{\bf y}_1^N, {\bf v}_1^{k-2} = {\bf 0}_1^{k-2} \right)
    P\left(\left.v_k\right|{\bf y}_1^N, {\bf v}_1^{k-2} = {\bf 0}_1^{k-2}, v_{k-1} \right),
\end{aligned}
\end{equation}
and $\max\nolimits^2\left\{P_{00}, P_{01}, P_{10}, P_{11}\right\}$ means the second largest path metric in $\left\{P_{00}, P_{01}, P_{10}, P_{11}\right\}$.
The subtree in Fig. \ref{FigCodeTreeZ}(b) shows the $4$ paths $\left(v_{k-1},v_k\right) \in \left\{0,1\right\}^2$ and the corresponding path metric.

Then, we provide Lemma \ref{LemmaProperty1} and Lemma \ref{LemmaProperty2} to describe the properties of $P_{v_{k-1},v_k}$ as follows.
\begin{lemma}\label{LemmaProperty1}
Given $P_{v_{k-1}} = P\left(\left.v_{k-1}\right|{\bf y}_1^N, {\bf 0}_1^{k-2} \right)$, we have
$
P_{v_{k-1}} = P_{v_{k-1},0} + P_{v_{k-1},1}.
$
\end{lemma}
\begin{lemma}\label{LemmaProperty2}
If $P_0 \ge P_1$, we have
$
\max\left\{P_{00}, P_{01}\right\} \ge \frac{P_0}{2} \ge \frac{P_1}{2} \ge \min\left\{P_{10}, P_{11}\right\}
$
and vice versa.
\end{lemma}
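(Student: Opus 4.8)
The plan is to reduce the whole statement to the elementary fact that for two nonnegative reals with a given sum, the larger is at least half the sum and the smaller is at most half the sum. First I would invoke Lemma \ref{LemmaProperty1} to write $P_0 = P_{00} + P_{01}$ and $P_1 = P_{10} + P_{11}$, so that the two parent path metrics are expressed as sums of their child metrics. Since each $P_{v_{k-1},v_k}$ is a conditional probability, all four quantities lie in $[0,1]$ and in particular are nonnegative; I would state this explicitly, as it is exactly what licenses the subsequent averaging arguments.

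Next, from $P_{00},P_{01}\ge 0$ and $P_{00}+P_{01}=P_0$ it follows that at least one of the two equals or exceeds their average, i.e. $\max\{P_{00},P_{01}\}\ge P_0/2$. Symmetrically, from $P_{10},P_{11}\ge 0$ and $P_{10}+P_{11}=P_1$ we get $\min\{P_{10},P_{11}\}\le P_1/2$. The middle inequality $P_0/2\ge P_1/2$ is immediate from the hypothesis $P_0\ge P_1$. Chaining the three gives $\max\{P_{00},P_{01}\}\ge P_0/2\ge P_1/2\ge\min\{P_{10},P_{11}\}$, which is the asserted chain.

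For the ``and vice versa'' clause I would observe that the claim is symmetric under interchanging the branch $v_{k-1}=0$ with the branch $v_{k-1}=1$ (this swaps $P_0\leftrightarrow P_1$, $\{P_{00},P_{01}\}\leftrightarrow\{P_{10},P_{11}\}$). Applying the argument above to the case $P_1\ge P_0$ then yields $\max\{P_{10},P_{11}\}\ge P_1/2\ge P_0/2\ge\min\{P_{00},P_{01}\}$, completing the proof. There is no real obstacle here: the only point requiring a moment's care is recording the nonnegativity of the child metrics before passing from a sum to a statement about its max and min.
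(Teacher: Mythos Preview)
Your argument is correct and is exactly the straightforward route one would expect: invoke Lemma~\ref{LemmaProperty1} for the sum decomposition, use nonnegativity of conditional probabilities to bound $\max$ below and $\min$ above by the half-sum, and chain through the hypothesis $P_0\ge P_1$. The paper in fact states Lemma~\ref{LemmaProperty2} without proof, treating it as an immediate consequence of Lemma~\ref{LemmaProperty1}; your write-up simply makes explicit the two-line averaging observation the authors left implicit, so there is nothing to compare.
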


Due to DE/GA algorithm, the probability density functions of the LLRs $l_{k-1}$ and $l_{k}$ defined in \eqref{EqDeflj} are obtained. Hence, $P\left({\mathcal C}_k\right)$ is divided into three parts as
\begin{equation}\label{EqSCL2ThreeParts}
\begin{aligned}
P\left({\mathcal C}_k\right) =
&P\left( {{\mathsf L}_{k-1}}\ge 0,{{\mathsf L}_{k}}\ge 0, {\mathcal C}_k\right) + \\
&P\left( {{\mathsf L}_{k-1}}\ge 0,{{\mathsf L}_{k}}<0, {\mathcal C}_k\right) + \\
&P\left( {{\mathsf L}_{k-1}}<0, {\mathcal C}_k \right).
\end{aligned}
\end{equation}
The subtrees of the three parts are illustrated in Fig. \ref{FigCodeTreeSCL2}(a), (b) and (c), respectively, where the probability of the red bold node is equal or larger than that of the black node attached to the identical node and the dotted node represents its probability can not be calculated by $p\left(l_{k-1}\right)$ and $p\left(l_{k}\right)$.
For example, in Fig. \ref{FigCodeTreeSCL2}(a), we have $P_0 \ge P_1$ and $P_{00} \ge P_{01}$, which is shown in red nodes.
Then, in Fig. \ref{FigCodeTreeSCL2}(c),
since ${{\mathsf L}_{k-1}}<0$ does not meet the condition of $p\left(l_{k}\right)$,
where $p\left(l_{k}\right)$ is determined when ${\bf v}_1^K = {\bf 0}_1^K$ and ${\bf {\widehat v}}_1^{k-1} = {\bf 0}_1^{k-1}$, i.e.,
\begin{equation}
p\left(l_{k}\right) = p\left(\left.l_{k}\right|{\bf {\widehat v}}_1^{k-1} = {\bf 0}_1^{k-1}\right) = p\left(\left.l_{k}\right|{\mathsf L}_{j}\ge 0, j\in[\![k-1]\!]\right),
\end{equation}
we can not obtain $p\left(l_{k}\right)$, which leads to that $P_{00}, P_{01}, P_{10}$ and $P_{11}$ are indeterminate.

\begin{figure}[t]
\setlength{\abovecaptionskip}{0.cm}
\setlength{\belowcaptionskip}{-0.cm}
  \centering{\includegraphics[scale=0.9]{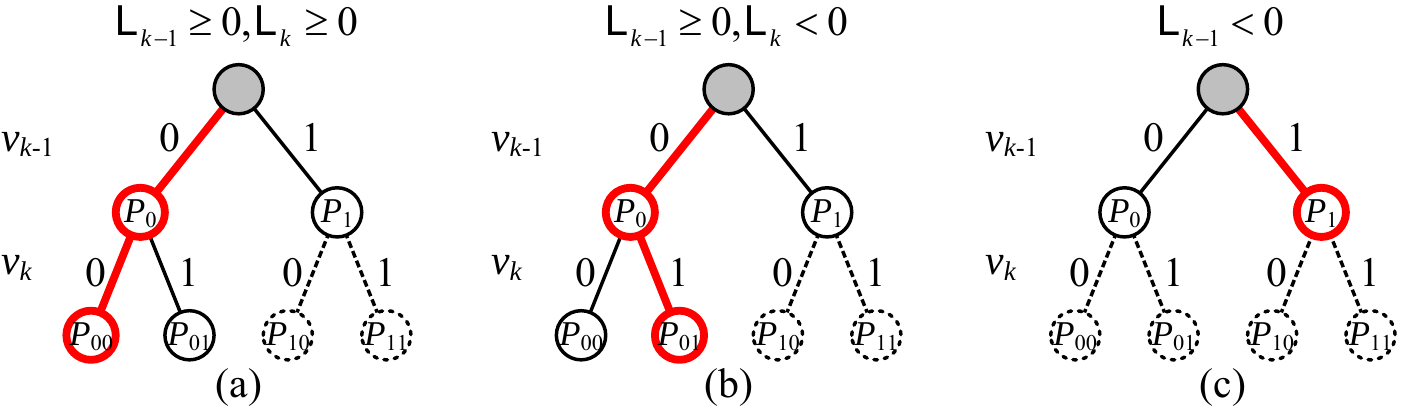}}
  \caption{The subtrees of the three divided parts of ${\mathcal C}_k$ with $L = 2$.}\label{FigCodeTreeSCL2}
  \vspace{-0em}
\end{figure}

Then, we derive the upper bound of the three parts to provide the upper bound of $P\left({\mathcal C}_k\right)$ in Proposition \ref{PropositionSCL2PCkUpperBound}.
\begin{proposition}\label{PropositionSCL2PCkUpperBound}
The upper bound of $P\left({\mathcal C}_k\right)$ is
\begin{equation}\label{EqPCkSCL2UpperBound}
\begin{aligned}
P\left({\mathcal C}_k\right) \le
&P\left( {{\mathsf L}_{k-1}}\ge 0,{{\mathsf L}_{k}}\ge 0\right) + \\
&P\left( {{\mathsf L}_{k-1}}\ge 0,-\ln \left( 2{{e}^{{{\mathsf L}_{k-1}}}}-1 \right)\le{{\mathsf L}_{k}}<0\right) + \\
&P\left( \ln \alpha \le{{\mathsf L}_{k-1}}<0 \right),
\end{aligned}
\end{equation}
where $\alpha = \frac{\min\left\{P_{10},P_{11}\right\}}{P_1}$.
\end{proposition}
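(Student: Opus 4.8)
The plan is to bound each of the three terms in the decomposition \eqref{EqSCL2ThreeParts} separately, in each case converting the event ${\mathcal C}_k$ --- namely that $P_{00}$ is at least the second largest among $\{P_{00},P_{01},P_{10},P_{11}\}$ --- into an event described only by ${\mathsf L}_{k-1}$ and ${\mathsf L}_{k}$. The tools are Lemma \ref{LemmaProperty1} (so that $P_{1}=P_{10}+P_{11}$ and $\max\{P_{10},P_{11}\}\ge P_{1}/2$), Lemma \ref{LemmaProperty2}, and the parametrizations $P_{0}=(1+e^{-{\mathsf L}_{k-1}})^{-1}$, $P_{1}=1-P_{0}=e^{-{\mathsf L}_{k-1}}(1+e^{-{\mathsf L}_{k-1}})^{-1}$ and $P_{00}=P_{0}(1+e^{-{\mathsf L}_{k}})^{-1}$, which follow from \eqref{EqLLRCodeTree}.

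For the first term (${\mathsf L}_{k-1}\ge0$, ${\mathsf L}_{k}\ge0$) we have $P_{0}\ge P_{1}$ and $P_{00}\ge P_{01}$, so Lemma \ref{LemmaProperty2} gives $P_{00}=\max\{P_{00},P_{01}\}\ge P_{1}/2\ge\min\{P_{10},P_{11}\}$; hence $P_{00}$ weakly dominates $P_{01}$ and at least one of $P_{10},P_{11}$, so ${\mathcal C}_k$ holds on the whole region and this term equals $P({\mathsf L}_{k-1}\ge0,{\mathsf L}_{k}\ge0)$. For the second term (${\mathsf L}_{k-1}\ge0$, ${\mathsf L}_{k}<0$) we have $P_{00}<P_{01}$, so $P_{00}$ already loses to $P_{01}$ and ${\mathcal C}_k$ forces $P_{00}\ge\max\{P_{10},P_{11}\}\ge P_{1}/2$; substituting the parametrizations, $P_{00}\ge P_{1}/2$ simplifies to $2e^{{\mathsf L}_{k-1}}-1\ge e^{-{\mathsf L}_{k}}$, i.e. ${\mathsf L}_{k}\ge-\ln(2e^{{\mathsf L}_{k-1}}-1)$, and intersecting with ${\mathsf L}_{k}<0$ gives the second term. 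For the third term (${\mathsf L}_{k-1}<0$), if ${\mathcal C}_k$ held with $\min\{P_{10},P_{11}\}>P_{00}$ then both $P_{10}$ and $P_{11}$ would exceed $P_{00}$, so two of the other three metrics beat $P_{00}$, contradicting that $P_{00}$ is at least the second largest; hence $\alpha P_{1}=\min\{P_{10},P_{11}\}\le P_{00}\le P_{00}+P_{01}=P_{0}$, so $\alpha\le P_{0}/P_{1}=e^{{\mathsf L}_{k-1}}$, i.e. ${\mathsf L}_{k-1}\ge\ln\alpha$, which with ${\mathsf L}_{k-1}<0$ gives the third term. Adding the three bounds yields \eqref{EqPCkSCL2UpperBound}.

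The main obstacle is that $P_{10},P_{11}$ (and, once ${\mathsf L}_{k-1}<0$, also $P_{00},P_{01}$) cannot be expressed through the densities $p(l_{k-1})$ and $p(l_{k})$ supplied by DE/GA, so they must be eliminated: the argument only ever uses the aggregate identity $P_{1}=P_{10}+P_{11}$, the crude bound $P_{00}\le P_{0}$, and the ordering estimates of Lemma \ref{LemmaProperty2}, which is exactly enough to collapse every case into a half-plane or band in the $({\mathsf L}_{k-1},{\mathsf L}_{k})$-plane, with the single residual quantity $\alpha\in(0,1/2]$ left implicit in the third term (and replaceable by $1/2$ if a fully DE/GA-computable bound is desired). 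A minor check is that $2e^{{\mathsf L}_{k-1}}-1\ge1>0$ when ${\mathsf L}_{k-1}\ge0$, so that $\ln(2e^{{\mathsf L}_{k-1}}-1)$ is well defined on the relevant region.
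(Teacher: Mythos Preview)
Your proof is correct and follows essentially the paper's route: the same three-region decomposition \eqref{EqSCL2ThreeParts}, the same use of Lemma~\ref{LemmaProperty2} and $\max\{P_{10},P_{11}\}\ge P_1/2$ for the second region, and the same reduction ${\mathcal C}_k\Rightarrow P_{00}\ge\min\{P_{10},P_{11}\}\Rightarrow P_0\ge\alpha P_1$ for the third (your direct counting argument there is in fact slightly cleaner than the paper's two-subcase split via Lemma~\ref{LemmaProperty2}). One caveat on your closing parenthetical: since $\alpha\le 1/2$ always, replacing $\alpha$ by $1/2$ \emph{shrinks} the interval $[\ln\alpha,0)$ and hence yields a lower bound on the third term, not an upper bound; for a DE/GA-computable upper bound on $P({\mathcal C}_k)$ you must instead enlarge the interval, e.g.\ drop $\alpha$ and use $P({\mathsf L}_{k-1}<0)$, as the paper notes after Theorem~\ref{TheoremSCLPCkUpperBound}.
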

\begin{proof}
For $P\left( {{\mathsf L}_{k-1}}\ge 0,{{\mathsf L}_{k}}\ge 0, {\mathcal C}_k\right)$ illustrated in Fig. \ref{FigCodeTreeSCL2}(a), due to $P_0 \ge P_1$ and $P_{00}\ge P_{01}$, by Lemma \ref{LemmaProperty2}, we have
\begin{equation}
\left\{
\begin{aligned}
&P_{00}\ge P_{01}, \\
&P_{00} = \max\left\{P_{00}, P_{01}\right\} \ge \min\left\{P_{10}, P_{11}\right\}.
\end{aligned}
\right.
\end{equation}
Hence, $P_{00}$ is at least the second largest path metric among $\left\{P_{00}, P_{01},P_{10}, P_{11}\right\}$ and we have
\begin{equation}\label{EqSCL2Part1Bound}
P\left( {{\mathsf L}_{k-1}}\ge 0,{{\mathsf L}_{k}}\ge 0, {\mathcal C}_k\right) = P\left( {{\mathsf L}_{k-1}}\ge 0,{{\mathsf L}_{k}}\ge 0\right).
\end{equation}

For $P\left( {{\mathsf L}_{k-1}}\ge 0,{{\mathsf L}_{k}}<0, {\mathcal C}_k\right)$ illustrated in Fig. \ref{FigCodeTreeSCL2}(b), due to $P_0 \ge P_1$ and $P_{00} < P_{01}$, we have
\begin{equation}\label{EqSCL2Part2Bound}
\begin{aligned}
P\left({{\mathsf L}_{k-1}}\ge 0,{{\mathsf L}_{k}}<0, {\mathcal C}_k\right)
&=  P\left( {{\mathsf L}_{k-1}}\ge 0,{{\mathsf L}_{k}}<0,
 P_{00}\ge\max \left\{P_{10}, P_{11}\right\}\right)\\
&\le P\left( {{\mathsf L}_{k-1}}\ge 0,{{\mathsf L}_{k}}<0,
 {{P}_{00}}\ge\frac{{{P}_{1}}}{2}\right)\\
&= P\left(
    \begin{aligned}
    &{{\mathsf L}_{k-1}}\ge 0,{{\mathsf L}_{k}}<0, \\
    &\frac{1}{{e^{-{\mathsf L}_{k-1}}}+1}\times \frac{1}{{e^{-{\mathsf L}_k}}+1} \ge \frac{1}{2}\times\frac{1}{{e^{{\mathsf L}_{k-1}}}+1}
    \end{aligned}
    \right)\\
&=P\left( {{\mathsf L}_{k-1}}\ge 0,-\ln \left( 2{{e}^{{\mathsf L}_{k-1}}}-1 \right)\le{{\mathsf L}_k}<0\right).
\end{aligned}
\end{equation}

For $P\left( {{\mathsf L}_{k-1}}<0, {\mathcal C}_k \right)$ illustrated in Fig. \ref{FigCodeTreeSCL2}(c), due to $P_0 < P_1$, we have
\begin{equation}\label{EqSCL2Part3Bound}
\begin{aligned}
&P\left( {{\mathsf L}_{k-1}}<0, {\mathcal C}_k \right) \\
    &= \underbrace{P\left( {{\mathsf L}_{k-1}}<0, {{P}_{00}}<{{P}_{01}},{{P}_{00}}\ge\max \left\{{{P}_{10}},{{P}_{11}}\right\} \right)}_{=0 {\rm~by~Lemma~\ref{LemmaProperty2}}} +
    P\left( {{\mathsf L}_{k-1}}<0, {{P}_{00}}\ge{{P}_{01}},{{P}_{00}}\ge\min \left\{{{P}_{10}},{{P}_{11}}\right\} \right) \\
    &\le  P\left( {{\mathsf L}_{k-1}}<0, {{P}_{00}}\ge\min \left\{{{P}_{10}},{{P}_{11}}\right\} \right) \\
    & \le P\left( {{\mathsf L}_{k-1}}<0, {{P}_{0}}\ge\min \left\{ {{P}_{10}},{{P}_{11}} \right\} \right) \\
    &= P\left( {{\mathsf L}_{k-1}}<0, {{P}_{0}}\ge \alpha P_1 \right) \\
    &= P\left( \ln \alpha \le{{\mathsf L}_{k-1}}<0 \right).
\end{aligned}
\end{equation}

Thus, by \eqref{EqSCL2Part1Bound} to  \eqref{EqSCL2Part3Bound}, the proof of Proposition \ref{PropositionSCL2PCkUpperBound} is completed.
\end{proof}

\subsection{Approximate Lower Bound of SCL Decoding with $L = 4$}

\begin{figure}[t]
\setlength{\abovecaptionskip}{0.cm}
\setlength{\belowcaptionskip}{-0.cm}
  \centering{\includegraphics[scale=0.9]{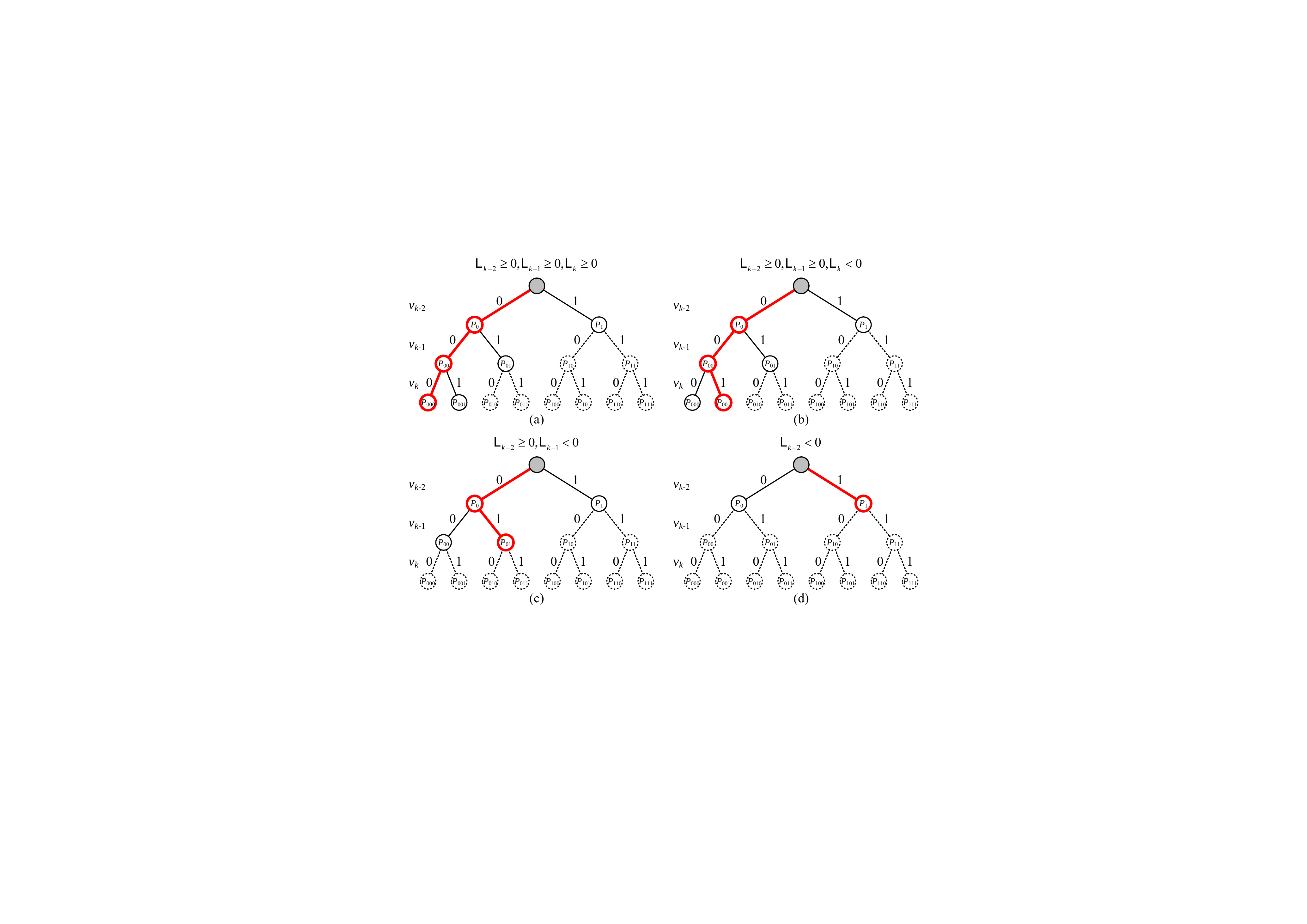}}
  \caption{The subtrees of the four divided parts of ${\mathcal C}_k$ with $L = 4$.}\label{FigCodeTreeSCL4}
  \vspace{-0em}
\end{figure}

For SCL decoding with $L = 4$ and $m = \log L = 2$, we have
\begin{equation}
{\mathcal C}_k = \left\{P_{000} \ge \underset{{{\bf v}_{k-2}^k \in \left\{0,1\right\}^3}}{\max\nolimits^4}\left\{P_{{\bf v}_{k-2}^k} \right\}\right\}
\end{equation}
and $P\left({\mathcal C}_k\right)$ is divided into four parts, i.e.,
\begin{equation}\label{EqSCL4FourParts}
\begin{aligned}
P\left({\mathcal C}_k\right) =
&P\left( {{\mathsf L}_{k-2}}\ge 0,{{\mathsf L}_{k-1}}\ge 0,{{\mathsf L}_{k}}\ge 0, {\mathcal C}_k\right) +
P\left( {{\mathsf L}_{k-2}}\ge 0, {{\mathsf L}_{k-1}}\ge 0,{{\mathsf L}_{k}}<0, {\mathcal C}_k\right) + \\
&P\left( {{\mathsf L}_{k-2}}\ge 0, {{\mathsf L}_{k-1}}<0, {\mathcal C}_k \right) +
P\left( {{\mathsf L}_{k-2}}< 0, {\mathcal C}_k \right).
\end{aligned}
\end{equation}

Fig. \ref{FigCodeTreeSCL4} illustrates the subtrees of the four parts in \eqref{EqSCL4FourParts}, where the probability of the red node is larger than that of the black node attached to the identical node and the probability of the dotted node is not determined.

Then, when ${{\mathsf L}_{k-2}}\ge 0$, since the correct path ${\bf 0}_{k-2}^k$ is not in $\left\{P_{100}, P_{101}, P_{110}, P_{110}\right\}$, we have the judgement of ${\bf v}_{k-1}^k$ attached $v_{k-2} = 1$ is random in $\left\{0,1\right\}^2$. Hence, the expectation of the 4 paths is
\begin{equation}\label{EqExpPSCL4}
E \left(P_{100}\right) = E \left(P_{101}\right) = E \left(P_{110}\right) = E \left(P_{111}\right) = \frac{P_1}{4}.
\end{equation}
With \eqref{EqExpPSCL4}, we propose Conjecture \ref{ConjectureSCL4} as follows.

\begin{conjecture}\label{ConjectureSCL4}
When ${{\mathsf L}_{k-2}}\ge 0$, we assume
\begin{equation}
P_{100} \approx P_{101} \approx P_{110} \approx P_{111} \approx \frac{P_1}{4}.
\end{equation}
\end{conjecture}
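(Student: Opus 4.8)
The statement is an approximation, not an exact identity, so the plan is to supply the symmetry argument that makes it the natural choice and to pinpoint the fluctuation term that keeps it a conjecture rather than a proposition. The backbone is already present in \eqref{EqExpPSCL4}: applying Lemma~\ref{LemmaProperty1} twice along the branch rooted at $v_{k-2}=1$ gives the \emph{exact} decomposition $P_1 = P_{100}+P_{101}+P_{110}+P_{111}$, so it is enough to argue that, conditioned on the value of $P_1$ (equivalently on $\mathsf{L}_{k-2}$), the four summands are close to equal. The role of the conjecture in the $L=4$ analysis mirrors that of Proposition~\ref{PropositionSCL2PCkUpperBound} for $L=2$: the masses $P_{100},P_{101},P_{110},P_{111}$ are exactly the dotted nodes of Fig.~\ref{FigCodeTreeSCL4}, i.e. they are not determined by $p(l_{k-2}),p(l_{k-1}),p(l_k)$, and replacing each by $P_1/4$ is what renders the event $\mathcal{C}_k$ expressible purely through the three LLRs whose densities DE/GA provides.

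The first step I would carry out is to establish conditional exchangeability of the four wrong-branch masses. When $\mathsf{L}_{k-2}\ge 0$ the correct bit is $v_{k-2}=0$, so the survival path carrying $v_{k-2}=1$ has an incorrect prefix; feeding this wrong bit back into the polar recursion merely XORs a fixed nonzero pattern into the partial-sum inputs feeding $W_N^{(a_{k-1})}$ and $W_N^{(a_k)}$. Combining the symmetry of the underlying B-DMC with the all-zero transmission assumption, the LLR pair $(l_{k-1},l_k)$ computed along this wrong branch has the distribution of the LLR pair of a channel that is essentially uninformative about $(v_{k-1},v_k)$, so the induced posterior $P\!\left(v_{k-1},v_k \mid {\bf y}_1^N,\ \text{wrong prefix}\right)$ is exchangeable over the four patterns in $\{0,1\}^2$. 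Taking expectations and invoking the exact sum rule yields $E\!\left(P_{1v_{k-1}v_k}\mid P_1\right)=P_1/4$, which is precisely \eqref{EqExpPSCL4}.

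The second and harder step is the concentration claim: that each $P_{1v_{k-1}v_k}$ is close enough to its mean $P_1/4$ that substituting the mean into the bound on $P(\mathcal{C}_k)$, and hence into the lower bound \eqref{EqSCLLowerBoundOutline}, incurs negligible error. Intuitively this should hold because what enters $\mathcal{C}_k$ is only the comparison of the correct mass $P_{000}$ with the fourth-largest of the eight split metrics, an order statistic that is insensitive to small perturbations of the four wrong masses sitting on the less-likely side $v_{k-2}=1$. This is the step I expect to be the real obstacle, and the reason the result is stated as a conjecture: the wrong-branch LLRs do not concentrate --- their spread is comparable to their mean --- so the substitution $P_{1v_{k-1}v_k}\leftarrow P_1/4$ cannot be upgraded to a one-sided inequality in the manner of Proposition~\ref{PropositionSCL2PCkUpperBound}, and the ultimate justification is the agreement between the resulting lower-bound curve and the simulations reported in Section~VI.
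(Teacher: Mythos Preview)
Your proposal is appropriate and, at its core, matches the paper: the paper does not prove Conjecture~\ref{ConjectureSCL4} at all---it merely motivates it by the one-line expectation identity \eqref{EqExpPSCL4} (``the judgement of ${\bf v}_{k-1}^k$ attached to $v_{k-2}=1$ is random in $\{0,1\}^2$, hence the expectations are all $P_1/4$'') and then states the conjecture as an assumption. Your first step is precisely this argument, dressed in the language of exchangeability and channel symmetry.

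Where you go beyond the paper is in your second step, the concentration discussion. The paper offers nothing analogous: it simply adopts the approximation, uses it in Proposition~\ref{PropositionSCL4PCkUpperBound}, and relies on the numerical agreement in Section~VI for validation. Your observation that the wrong-branch LLRs do not concentrate---so the substitution cannot be sharpened to a one-sided inequality in the style of Proposition~\ref{PropositionSCL2PCkUpperBound}---is a genuine addition; it explains \emph{why} the paper labels this a conjecture and why the resulting bound for $L=4$ is only ``approximate'' (cf.\ the $\lesssim$ in \eqref{EqPCkSCL4UpperBound} versus the $\le$ in \eqref{EqPCkSCL2UpperBound}). This is useful context the paper leaves implicit.

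One small caution: your claim that the wrong-prefix LLR pair is ``essentially uninformative about $(v_{k-1},v_k)$'' is a heuristic, not a consequence of channel symmetry alone; symmetry gives you equal marginals (hence \eqref{EqExpPSCL4}) but not that the posterior is nearly uniform sample-by-sample. That is exactly the gap you already flag, so this is not an error in your plan---just be careful not to overstate what symmetry buys in step one.
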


With Conjecture \ref{ConjectureSCL4}, the approximate upper bound of $P\left({\mathcal C}_k\right)$ is derived in Proposition \ref{PropositionSCL4PCkUpperBound}.

\begin{proposition}\label{PropositionSCL4PCkUpperBound}
The approximate upper bound of $P\left({\mathcal C}_k\right)$ is
\begin{equation}\label{EqPCkSCL4UpperBound}
\begin{aligned}
P\left({\mathcal C}_k\right) \lesssim
&P\left( {{\mathsf L}_{k-2}}\ge 0,{{\mathsf L}_{k-1}}\ge 0,{{\mathsf L}_{k}}\ge 0\right) +
P\left(
    \begin{aligned}
    &{{\mathsf L}_{k-2}}\ge 0,{{\mathsf L}_{k-1}} \ge 0,\\
    &-\ln \left( \frac{4e^{{\mathsf L}_{k-2}}}{e^{-{\mathsf L}_{k-1}} + 1}-1 \right)\le{{\mathsf L}_{k}}<0
    \end{aligned}
\right) + \\
&P\left(
    \begin{aligned}
    {{\mathsf L}_{k-2}}\ge 0,
    -\ln \left( 4{{e}^{{{\mathsf L}_{k-2}}}}-1 \right)\le{{\mathsf L}_{k-1}}<0
    \end{aligned}
\right) +
P\left( \ln \alpha \le{{\mathsf L}_{k-2}}<0 \right),
\end{aligned}
\end{equation}
where $\alpha = \frac{\min\left\{P_{100},P_{101},P_{110},P_{111}\right\}}{P_1}$.
\end{proposition}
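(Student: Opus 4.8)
The plan is to replicate the proof of Proposition~\ref{PropositionSCL2PCkUpperBound}, now splitting $P({\mathcal C}_k)$ according to the four sign patterns of $\left({\mathsf L}_{k-2},{\mathsf L}_{k-1},{\mathsf L}_{k}\right)$ displayed in \eqref{EqSCL4FourParts} and bounding each of the four summands with Lemma~\ref{LemmaProperty1}, the halving inequalities of the type in Lemma~\ref{LemmaProperty2}, and Conjecture~\ref{ConjectureSCL4}. The bridge between path metrics and LLRs is the same as in the $L=2$ case: from \eqref{EqDeflj} one has $P_0=\left(e^{-{\mathsf L}_{k-2}}+1\right)^{-1}$ and $P_1=\left(e^{{\mathsf L}_{k-2}}+1\right)^{-1}$, hence $P_1/P_0=e^{-{\mathsf L}_{k-2}}$ and $P_0/P_1=e^{{\mathsf L}_{k-2}}$, together with the child relations $P_{00}=P_0\left(e^{-{\mathsf L}_{k-1}}+1\right)^{-1}$ (where $P_{00}$ denotes the probability of the prefix $v_{k-2}v_{k-1}=00$, i.e.\ the parent node of $P_{000}$) and $P_{000}=P_{00}\left(e^{-{\mathsf L}_{k}}+1\right)^{-1}$; substituting these will turn each path-metric inequality obtained below into the stated threshold on ${\mathsf L}_{k-2},{\mathsf L}_{k-1},{\mathsf L}_{k}$.

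For the first part (${\mathsf L}_{k-2},{\mathsf L}_{k-1},{\mathsf L}_{k}\ge0$) I would observe that then $P_0\ge P_1$, $P_{00}\ge P_{01}$ and $P_{000}\ge P_{001}$, so by Lemma~\ref{LemmaProperty1} and successive halving $P_{000}\ge P_{00}/2\ge P_0/4\ge P_1/4$; using Conjecture~\ref{ConjectureSCL4} to write $P_{100}\approx P_{101}\approx P_{110}\approx P_{111}\approx P_1/4$ then makes $P_{000}$ at least as large as four of the other seven metrics, so it is among the four survivors, ${\mathcal C}_k$ holds, and this part contributes $\approx P\left({\mathsf L}_{k-2}\ge0,{\mathsf L}_{k-1}\ge0,{\mathsf L}_{k}\ge0\right)$. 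For the remaining three parts the common step is a counting argument: on ${\mathcal C}_k$ at most three of the seven competitors exceed $P_{000}$, hence they cannot all do so, i.e.\ $P_{000}\ge\min\{P_{100},P_{101},P_{110},P_{111}\}$. In the fourth part (${\mathsf L}_{k-2}<0$) I would relax $P_{000}\le P_{00}\le P_0$ and use $P_0/P_1=e^{{\mathsf L}_{k-2}}$ to rewrite $P_0\ge\alpha P_1$ as ${\mathsf L}_{k-2}\ge\ln\alpha$, obtaining the exact bound $P\left({\mathsf L}_{k-2}<0,{\mathcal C}_k\right)\le P\left(\ln\alpha\le{\mathsf L}_{k-2}<0\right)$ with no approximation, $\alpha$ playing the role it does in Proposition~\ref{PropositionSCL2PCkUpperBound}. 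In the second and third parts (${\mathsf L}_{k-2}\ge0$, so Conjecture~\ref{ConjectureSCL4} applies) I would further replace $\min\{P_{100},P_{101},P_{110},P_{111}\}$ by $\approx P_1/4$: for the third part ($P_{00}<P_{01}$, with ${\mathsf L}_{k}$ left free) relaxing $P_{000}\le P_{00}$ converts $P_{00}\ge P_1/4$ into ${\mathsf L}_{k-1}\ge-\ln\left(4e^{{\mathsf L}_{k-2}}-1\right)$, and for the second part ($P_{000}<P_{001}$ here, which only reinforces the counting) the inequality $P_{000}\ge P_1/4$ becomes ${\mathsf L}_{k}\ge-\ln\!\left(\frac{4e^{{\mathsf L}_{k-2}}}{e^{-{\mathsf L}_{k-1}}+1}-1\right)$. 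Adding the four contributions gives \eqref{EqPCkSCL4UpperBound}.

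The main obstacle is structural rather than computational: each bound hinges on pinning down exactly which of the eight path metrics can exceed $P_{000}$ under the given sign pattern, and for the first three parts this identification is only approximate because it substitutes the flat profile $P_{1v_{k-1}v_{k}}\approx P_1/4$ of Conjecture~\ref{ConjectureSCL4} for the metrics on the incorrect subtree — this is the sole source of the $\lesssim$, and it is also why the fourth part, where ${\mathsf L}_{k-2}<0$ places the correct path on the low-metric subtree and Conjecture~\ref{ConjectureSCL4} no longer applies, must be bounded exactly through $\alpha$. Making the first-part term tight, rather than merely using the trivial bound $P\left({\mathsf L}_{k-2}\ge0,{\mathsf L}_{k-1}\ge0,{\mathsf L}_{k}\ge0\right)$ obtained by dropping ${\mathcal C}_k$, additionally requires the chain $P_{000}\ge P_{00}/2\ge P_0/4\ge P_1/4$ to meet the conjectured value. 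Once the competitor set is fixed in each case, the rest is the routine algebra already illustrated in \eqref{EqSCL2Part2Bound}, turning $P_{000}\ge P_1/4$, $P_{00}\ge P_1/4$ and $P_0\ge\alpha P_1$ into the displayed LLR thresholds.
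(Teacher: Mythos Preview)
Your proposal is correct and reaches the same four terms as the paper, via the same decomposition \eqref{EqSCL4FourParts}. The main methodological difference is in the second and third parts: the paper carries out an exhaustive case split on the position of $P_{000}$ relative to $\{P_{010},P_{011}\}$ (and, in the third part, also relative to $P_{001}$), uses Lemma~\ref{LemmaProperty2} to discard the impossible cases, and only then collapses everything to $P_{000}\gtrsim P_1/4$ via Conjecture~\ref{ConjectureSCL4}. Your counting argument --- on $\mathcal{C}_k$ at most three competitors exceed $P_{000}$, so not all four of $P_{100},P_{101},P_{110},P_{111}$ can --- cuts directly to $P_{000}\ge\min\{P_{1v_{k-1}v_k}\}$ and then to $\approx P_1/4$, bypassing the case analysis entirely; this is shorter and equally valid, since under the conjecture every rank of $\{P_{1v_{k-1}v_k}\}$ collapses to the same value anyway. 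For the first part, the paper in fact uses only the trivial bound $P(\cdot,\mathcal{C}_k)\le P(\cdot)$ (the Lemma~\ref{LemmaProperty2} chain there shows $P_{000}$ is at least fifth-largest, which does not by itself yield $\mathcal{C}_k$), whereas you also argue approximate tightness through the conjecture; either route gives the same term, and you explicitly note the trivial-bound option. The fourth part is identical in both treatments.
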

\begin{proof}
For $P\left( {{\mathsf L}_{k-2}}\ge 0,{{\mathsf L}_{k-1}}\ge 0,{{\mathsf L}_{k}}\ge 0, {\mathcal C}_k\right)$ illustrated in Fig. \ref{FigCodeTreeSCL4}(a), by Lemma \ref{LemmaProperty2}, we have
\begin{equation}
\left\{
\begin{aligned}
&P_{000}\ge P_{001}, \\
&P_{000} = \max\left\{P_{000}, P_{001}\right\} \ge \min\left\{P_{010}, P_{011}\right\},\\
&P_{00} \ge \min\left\{P_{10}, P_{11}\right\} \Rightarrow
P_{000} = \max\left\{P_{000}, P_{001}\right\} \ge \min \left\{\begin{aligned}P_{100},P_{101},P_{110},P_{111}\end{aligned}\right\}.
\end{aligned}
\right.
\end{equation}
Hence, $P_{000}$ is at least the 5-th largest path metric among $\left\{P_{{\bf v}_{k-2}^k}, {\bf v}_{k-2}^k \in \left\{0,1\right\}^3\right\}$ and we have
\begin{equation}\label{EqSCL4Part1Bound}
\begin{aligned}
P\left( {{\mathsf L}_{k-2}}\ge 0,{{\mathsf L}_{k-1}}\ge 0,{{\mathsf L}_{k}}\ge 0, {\mathcal C}_k\right)
\le P\left( {{\mathsf L}_{k-2}}\ge 0,{{\mathsf L}_{k-1}}\ge 0,{{\mathsf L}_{k}}\ge 0\right).
\end{aligned}
\end{equation}

For $P\left( {{\mathsf L}_{k-2}}\ge 0, {{\mathsf L}_{k-1}}\ge 0,{{\mathsf L}_{k}}<0, {\mathcal C}_k\right)$ illustrated in Fig. \ref{FigCodeTreeSCL4}(b), by Conjecture \ref{ConjectureSCL4}, the probability is upper bounded as
\begin{equation}\label{EqSCL4Part2Transform}
\begin{aligned}
&P\left( {{\mathsf L}_{k-2}}\ge 0, {{\mathsf L}_{k-1}}\ge 0,{{\mathsf L}_{k}}<0, {\mathcal C}_k\right) \\
&\begin{aligned}
    = &P\left(
        {{\mathsf L}_{k-2}}\ge 0, {{\mathsf L}_{k-1}}\ge 0,{{\mathsf L}_{k}}<0,
         P_{000} \ge \max \left\{ {{P}_{010}},{{P}_{011}} \right\},
        P_{000} \ge \max\nolimits^3 \left\{P_{100},P_{101},P_{110},P_{111}\right\}
    \right) + \\
    &P\left(
        \begin{aligned}
        &{{\mathsf L}_{k-2}}\ge 0, {{\mathsf L}_{k-1}}\ge 0,{{\mathsf L}_{k}}<0,
         \min \left\{ {{P}_{010}},{{P}_{011}} \right\}\le P_{000} < \max \left\{ {{P}_{010}},{{P}_{011}} \right\},\\
        &P_{000} \ge \max\nolimits^2 \left\{P_{100},P_{101},P_{110},P_{111}\right\}
        \end{aligned}
    \right) + \\
    &P\left(
        {{\mathsf L}_{k-2}}\ge 0, {{\mathsf L}_{k-1}}\ge 0,{{\mathsf L}_{k}}<0,
         P_{000} < \min \left\{ {{P}_{010}},{{P}_{011}} \right\},
        P_{000} \ge \max\nolimits \left\{P_{100},P_{101},P_{110},P_{111}\right\}
    \right)
\end{aligned}\\
&\le P\left(
        {{\mathsf L}_{k-2}}\ge 0, {{\mathsf L}_{k-1}}\ge 0,{{\mathsf L}_{k}}<0, P_{000} \ge \max\nolimits^3 \left\{P_{100},P_{101},P_{110},P_{111}\right\}
    \right)\\
&\approx P\left(
    {{\mathsf L}_{k-2}}\ge 0, {{\mathsf L}_{k-1}}\ge 0,{{\mathsf L}_{k}}<0,
    {{P}_{000}}\ge \frac{P_1}{4}
    \right) \\
&=P\left(
    {{\mathsf L}_{k-2}}\ge 0,{{\mathsf L}_{k-1}} \ge 0,
    -\ln \left( \frac{4e^{{\mathsf L}_{k-2}}}{e^{-{\mathsf L}_{k-1}} + 1}-1 \right)\le{{\mathsf L}_{k}}<0
    \right).
\end{aligned}
\end{equation}

For $P\left( {{\mathsf L}_{k-2}}\ge 0, {{\mathsf L}_{k-1}}<0, {\mathcal C}_k \right)$ illustrated in Fig. \ref{FigCodeTreeSCL4}(c), by Conjecture \ref{ConjectureSCL4}, the probability is bounded as
\begin{equation}\label{EqSCL4Part3Transform}
\begin{aligned}
&P\left( {{\mathsf L}_{k-2}}\ge 0, {{\mathsf L}_{k-1}}<0, {\mathcal C}_k \right) \\
&\begin{aligned}
= &P\left(
        \begin{aligned}
        {{\mathsf L}_{k-2}}\ge 0, {{\mathsf L}_{k-1}}<0, P_{000} \ge P_{001},
        P_{000} \ge \max \left\{ {{P}_{010}},{{P}_{011}} \right\},
        P_{000} \ge \max\nolimits^4 \left\{P_{100},P_{101},P_{110},P_{111}\right\}
        \end{aligned}
    \right) + \\
    &P\left(
        \begin{aligned}
        &{{\mathsf L}_{k-2}}\ge 0, {{\mathsf L}_{k-1}}<0, P_{000} \ge P_{001},
        \min \left\{ {{P}_{010}},{{P}_{011}} \right\}\le P_{000} < \max \left\{ {{P}_{010}},{{P}_{011}} \right\},\\
        &P_{000} \ge \max\nolimits^3 \left\{P_{100},P_{101},P_{110},P_{111}\right\}
        \end{aligned}
    \right) + \\
    &P\left(
        \begin{aligned}
        {{\mathsf L}_{k-2}}\ge 0, {{\mathsf L}_{k-1}}<0, P_{000} \ge P_{001},
        P_{000} < \min \left\{ {{P}_{010}},{{P}_{011}} \right\},
        P_{000} \ge \max\nolimits^2 \left\{P_{100},P_{101},P_{110},P_{111}\right\}
        \end{aligned}
    \right) + \\
    &\underbrace{P\left(
        {{\mathsf L}_{k-2}}\ge 0, {{\mathsf L}_{k-1}}<0, P_{000} < P_{001},
        P_{000} \ge \max \left\{ {{P}_{010}},{{P}_{011}} \right\}
    \right)}_{=0 {\rm~by~Lemma~\ref{LemmaProperty2}}} + \\
    &P\left(
        \begin{aligned}
        &{{\mathsf L}_{k-2}}\ge 0, {{\mathsf L}_{k-1}}<0, P_{000} < P_{001},
        \min \left\{ {{P}_{010}},{{P}_{011}} \right\}\le P_{000} < \max \left\{ {{P}_{010}},{{P}_{011}} \right\},\\
        &P_{000} \ge \max\nolimits^2 \left\{P_{100},P_{101},P_{110},P_{111}\right\}
        \end{aligned}
    \right) + \\
    &P\left(
        \begin{aligned}
        {{\mathsf L}_{k-2}}\ge 0, {{\mathsf L}_{k-1}}<0, P_{000} < P_{001},
        P_{000} < \min \left\{ {{P}_{010}},{{P}_{011}} \right\},
        P_{000} \ge \max \left\{P_{100},P_{101},P_{110},P_{111}\right\}
        \end{aligned}
    \right)
\end{aligned}\\
&\begin{aligned}
\le &P\left(
        \begin{aligned}
        {{\mathsf L}_{k-2}}\ge 0, {{\mathsf L}_{k-1}}<0, P_{000} \ge P_{001},
        P_{000} \ge \max\nolimits^4 \left\{P_{100},P_{101},P_{110},P_{111}\right\}
        \end{aligned}
    \right) + \\
    &P\left(
        \begin{aligned}
        {{\mathsf L}_{k-2}}\ge 0, {{\mathsf L}_{k-1}}<0, P_{000} < P_{001},
        P_{000} \ge \max\nolimits^2 \left\{P_{100},P_{101},P_{110},P_{111}\right\}
        \end{aligned}
    \right)
\end{aligned}\\
&\approx P\left(
    {{\mathsf L}_{k-2}}\ge 0, {{\mathsf L}_{k-1}}<0,
    P_{000} \ge \frac{P_1}{4}
    \right)\\
&\le  P\left(
    {{\mathsf L}_{k-2}}\ge 0, {{\mathsf L}_{k-1}}<0,
    P_{00} \ge \frac{P_1}{4}
    \right) \\
&= P\left(
    \begin{aligned}
    {{\mathsf L}_{k-2}}\ge 0,
    -\ln \left( 4{{e}^{{{\mathsf L}_{k-2}}}}-1 \right)\le{{\mathsf L}_{k-1}}<0
    \end{aligned}
\right).
\end{aligned}
\end{equation}

For $P\left( {{\mathsf L}_{k-2}}< 0, {\mathcal C}_k \right)$ illustrated in Fig. \ref{FigCodeTreeSCL4}(c),
given $\alpha = \frac{\min\left\{P_{100},P_{101},P_{110},P_{111}\right\}}{P_1}$, we have
\begin{equation}\label{EqSCL4Part4Bound}
\begin{aligned}
P\left( {{\mathsf L}_{k-2}}< 0, {\mathcal C}_k \right)
&\le P\left(
    {{\mathsf L}_{k-2}}< 0,
    P_{000} \ge \min \left\{P_{100},P_{101},P_{110},P_{111}\right\}
    \right) \\
&\le P\left(
    {{\mathsf L}_{k-2}}< 0,
    P_{0} \ge \min \left\{P_{100},P_{101},P_{110},P_{111}\right\}
    \right) \\
&= P\left(
    {{\mathsf L}_{k-2}}< 0,
    P_{0} \ge \alpha P_1
    \right) \\
&= P\left( \ln \alpha \le{{\mathsf L}_{k-2}}<0 \right).
\end{aligned}
\end{equation}
Thus, by \eqref{EqSCL4Part1Bound} to \eqref{EqSCL4Part4Bound}, the proof of Proposition \ref{PropositionSCL4PCkUpperBound} is completed.
\end{proof}

\subsection{Approximate Performance of SCL Decoding with $L \ge 8$}

For SCL decoding with $L = 2^m$, $m \ge 3$, we have
\begin{equation}
{\mathcal C}_k = \left\{P_{{\bf 0}_{k-m}^k} \ge \max\nolimits^L\left\{\left.P_{{\bf v}_{k-m}^k}\right| {\bf v}_{k-m}^k \in \left\{0,1\right\}^{m+1}\right\}\right\}
\end{equation}
and $P\left({\mathcal C}_k\right)$ is divided into $(m+2)$ parts, i.e.,
\begin{equation}\label{EqSCL8mADD2Parts}
\begin{aligned}
P\left({\mathcal C}_k\right) =
    &P\left(
        {\mathsf L}_{k-m}\ge 0, \cdots, {\mathsf L}_{k}\ge 0,
        {\mathcal C}_k
        \right) + \\
    &\sum_{i=1}^m P\left(
        {\mathsf L}_{k-m}\ge 0, \cdots, {\mathsf L}_{k-i}\ge 0,
        {{\mathsf L}_{k-i+1}}<0, {\mathcal C}_k
        \right) + \\
    &P\left(
        {{\mathsf L}_{k-m}}< 0, {\mathcal C}_k
        \right).
\end{aligned}
\end{equation}
To calculate \eqref{EqSCL8mADD2Parts}, we provide the superset of ${\mathcal C}_k$ in Lemma \ref{LemmaP0L8largerMINP1paths}.
\begin{lemma}\label{LemmaP0L8largerMINP1paths}
The superset of ${\mathcal C}_k$ is the event that
the path metric of ${\bf 0}_{k-m}^k$ is larger than
the minimum path metric among the attached paths of $v_{k-m} = 1$, i.e.,
\begin{equation}
{\mathcal C}_k \subseteq
\left\{
    P_{{\bf v}_{k-m}^k = {\bf 0}_{k-m}^k} \ge \underset{{\bf v}_{k-m+1}^k \in \left\{0,1\right\}^{m}}{\min} \left\{
        P_{v_{k-m} = 1, {\bf v}_{k-m+1}^k }
    \right\}
\right\}.
\end{equation}
\end{lemma}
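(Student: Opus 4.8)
The plan is to prove the containment by contraposition, treating it as a pure pigeonhole statement about ranks rather than anything analytic. Recall that, under the conditioning ${\mathcal V}_{k-1} = {\mathcal Z}_{k-1}^m$ built into ${\mathcal C}_k$, the list ${\mathcal V}_k$ consists of the $L = 2^m$ largest metrics among the $2L = 2^{m+1}$ extensions ${\bf v}_{k-m}^k \in \left\{0,1\right\}^{m+1}$ (after discarding the common all-zero prefix ${\bf v}_1^{k-m-1} = {\bf 0}_1^{k-m-1}$), and ${\mathcal C}_k$ is precisely the event that the all-zero extension ${\bf 0}_{k-m}^k$ is one of these $L$ survivors, i.e. $P_{{\bf 0}_{k-m}^k} \ge \max\nolimits^L\left\{P_{{\bf v}_{k-m}^k}\right\}$.

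First I would partition the $2^{m+1}$ candidate extensions into the $2^m$ paths with $v_{k-m} = 0$ and the $2^m$ paths with $v_{k-m} = 1$, and assume the negation of the target event, namely $P_{{\bf 0}_{k-m}^k} < \min_{{\bf v}_{k-m+1}^k \in \left\{0,1\right\}^{m}} P_{v_{k-m} = 1, {\bf v}_{k-m+1}^k}$. Under this assumption every one of the $2^m = L$ extensions with $v_{k-m} = 1$ has metric strictly larger than $P_{{\bf 0}_{k-m}^k}$; these $L$ paths are all distinct from ${\bf 0}_{k-m}^k$ since they differ from it in the coordinate $v_{k-m}$. Hence there are at least $L$ path metrics strictly exceeding $P_{{\bf 0}_{k-m}^k}$, so by the meaning of $\max\nolimits^L$ (the $L$-th largest value) we get $P_{{\bf 0}_{k-m}^k} < \max\nolimits^L\left\{P_{{\bf v}_{k-m}^k}\right\}$, i.e. ${\bf 0}_{k-m}^k \notin {\mathcal V}_k$ and ${\mathcal C}_k$ does not occur. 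Contraposing yields exactly the claimed inclusion.

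There is essentially no obstacle here; the only care needed is bookkeeping around ties in the ordering. Since additional ties can only lower the rank of $P_{{\bf 0}_{k-m}^k}$, the strict inequality in the hypothesis already forces $P_{{\bf 0}_{k-m}^k}$ out of the top $L$, so the argument goes through regardless of the tie-breaking convention used for $\max\nolimits^i$. This lemma will then be used, exactly as Propositions~\ref{PropositionSCL2PCkUpperBound} and~\ref{PropositionSCL4PCkUpperBound} used their $L = 2$ and $L = 4$ analogues, to replace the intractable ${\mathcal C}_k$ by this tractable superset inside each of the $(m+2)$ summands of \eqref{EqSCL8mADD2Parts}, after which $\min\left\{P_{v_{k-m}=1,\cdot}\right\}$ can be further controlled through the expectation heuristic $P_{v_{k-m}=1,\cdot} \approx P_{v_{k-m}=1}/L$ in the spirit of Conjecture~\ref{ConjectureSCL4}.
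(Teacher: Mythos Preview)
Your proposal is correct and takes essentially the same approach as the paper: both are the same pigeonhole argument, with the paper arguing directly (if ${\mathcal C}_k$ holds then at least $L$ of the other $2L-1$ paths have metric $\le P_{{\bf 0}_{k-m}^k}$, and since only $L-1$ of those have $v_{k-m}=0$, at least one with $v_{k-m}=1$ does) while you phrase it contrapositively. Your added remarks about ties and about the downstream use of the lemma are accurate and go slightly beyond what the paper spells out.
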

\begin{proof}
${\mathcal C}_k$ is equivalent to the event that existing $L$ paths in $\left\{\left.{\bf v}_{k-m}^k\right|{\bf v}_{k-m}^k \in \left\{0,1\right\}^{m+1}, {\bf v}_{k-m}^k \ne {\bf 0}_{k-m}^k\right\}$ with the path metric equal or less than $P_{{\bf v}_{k-m}^k = {\bf 0}_{k-m}^k}$.

Then, existing one path in $\left\{\left.\left(v_{k-m} = 1, {\bf v}_{k-m+1}^k\right)\right|{\bf v}_{k-m+1}^k \in \left\{0,1\right\}^m\right\}$ is included in the $L$ paths. Thus, Lemma \ref{LemmaP0L8largerMINP1paths} is proven.
\end{proof}

Then, we extend Conjecture \ref{ConjectureSCL4} to Conjecture \ref{ConjectureSCL8} as follows.
\begin{conjecture}\label{ConjectureSCL8}
When ${{\mathsf L}_{k-m}}\ge 0$, we assume
\begin{equation}
\underset{{\bf v}_{k-m+1}^k \in \left\{0,1\right\}^{m}}{\min} \left\{
        P_{v_{k-m} = 1, {\bf v}_{k-m+1}^k }
    \right\} \approx \frac{P_1}{L}.
\end{equation}
\end{conjecture}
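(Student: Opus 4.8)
The plan is to obtain Conjecture~\ref{ConjectureSCL8} as the natural $m$-fold extension of the heuristic behind Conjecture~\ref{ConjectureSCL4}. First I would establish the exact book-keeping identity that underlies the argument: by iterating Lemma~\ref{LemmaProperty1} (equivalently, applying the chain rule for $P(\,\cdot\mid{\bf y}_1^N,{\bf 0}_1^{k-m-1})$ and marginalising the $m$ trailing coordinates one at a time) one gets
\begin{equation}
P_1 = P_{v_{k-m}=1} = \sum_{{\bf v}_{k-m+1}^k\in\{0,1\}^m} P_{v_{k-m}=1,\,{\bf v}_{k-m+1}^k},
\end{equation}
so the $L=2^m$ path metrics hanging below the node $v_{k-m}=1$ have arithmetic mean exactly $P_1/L$; in particular $\min_{{\bf v}_{k-m+1}^k} P_{v_{k-m}=1,{\bf v}_{k-m+1}^k}\le P_1/L$ always holds, which already gives one side of the approximation for free.

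Next I would justify that, conditioned on ${\mathsf L}_{k-m}\ge 0$, these $L$ metrics are not merely equal on average but all of comparable size. Since the transmitted subvector is ${\bf 0}_{k-m}^k$, entering the branch $v_{k-m}=1$ means the decoder proceeds along a subtree that contains none of the correct bits; the decision LLRs $l_{k-m+1},\dots,l_k$ computed inside this subtree are therefore driven by noise rather than by true side information, they are symmetric about $0$, and flipping any subset of the $m$ trailing coordinates acts as a distribution-preserving relabelling of the subtree. Hence every $P_{v_{k-m}=1,{\bf v}_{k-m+1}^k}$ has the same conditional distribution, with common mean $P_1/L$ — the generalisation of \eqref{EqExpPSCL4}. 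Invoking the same approximation used for $m=2$, namely that in the medium-to-high SNR region the trailing factors $\bigl(e^{-(1-2v_j)l_j}+1\bigr)^{-1}$ stay close to $1/2$ so that each path metric concentrates near $P_1/2^m$, we obtain $\min_{{\bf v}_{k-m+1}^k} P_{v_{k-m}=1,{\bf v}_{k-m+1}^k}\approx P_1/L$, which is the claim; for $m=2$ this reduces exactly to Conjecture~\ref{ConjectureSCL4}.

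The main obstacle is precisely that last concentration step: it is an approximation, not an inequality. If some of the synthetic channels carrying $v_{k-m+1},\dots,v_k$ remain reasonably reliable even under the wrong prefix, the $L$ metrics spread over several orders of magnitude and the minimum can fall well below $P_1/L$, so Conjecture~\ref{ConjectureSCL8} cannot be turned into a theorem along this route. This is why it is stated as a conjecture, to be corroborated empirically in Section VI; note also that it enters the analysis only through the superset of ${\mathcal C}_k$ in Lemma~\ref{LemmaP0L8largerMINP1paths}, so what is actually needed downstream is that replacing the true minimum by $P_1/L$ keeps the resulting estimate of $P({\mathcal C}_k)$, and hence of $P({\mathcal E}_{\tt PL})$, close to the simulated values.
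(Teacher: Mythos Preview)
Your proposal is sound and, in fact, more detailed than what the paper offers. The paper does not prove Conjecture~\ref{ConjectureSCL8} at all: it simply states ``we extend Conjecture~\ref{ConjectureSCL4} to Conjecture~\ref{ConjectureSCL8}'' and immediately follows with a remark that using $P_1/L$ in place of the true minimum yields only an approximate performance rather than a rigorous lower bound. The only heuristic the paper provides is the one preceding Conjecture~\ref{ConjectureSCL4}, namely that once the decoder enters the wrong branch $v_{k-m}=1$ the subsequent decisions are ``random'' so the expectations of all descendant path metrics coincide, which is exactly your symmetry/exchangeability argument specialised to $m=2$.

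Your additions --- the explicit marginalisation identity $P_1=\sum_{{\bf v}_{k-m+1}^k}P_{1,{\bf v}_{k-m+1}^k}$ giving the one-sided inequality $\min\le P_1/L$ for free, and the honest acknowledgement that the concentration step fails when some trailing synthetic channels remain reliable under the wrong prefix --- go beyond the paper's treatment but are entirely consistent with its spirit. You have correctly identified both why the statement is labelled a conjecture and how it is actually consumed downstream via Lemma~\ref{LemmaP0L8largerMINP1paths}.
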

\begin{remark}
With Conjecture \ref{ConjectureSCL8}, we use $\frac{P_1}{L}$ to replace the minimum path metric among the attached paths of $v_{k-m} = 1$ when ${{\mathsf L}_{k-m}}\ge 0$, which leads to that the derived result is the approximate performance rather than the lower bound.
\end{remark}

With Lemma \ref{LemmaP0L8largerMINP1paths} and Conjecture \ref{ConjectureSCL8}, we provide the upper bound of $P\left({\mathcal C}_k\right)$ in Proposition \ref{PropositionSCL8PCkUpperBound} as follows.
\begin{proposition}\label{PropositionSCL8PCkUpperBound}
The upper bound of $P\left({\mathcal C}_k\right)$ is
\begin{equation}\label{EqPCkSCL8UpperBound}
\begin{aligned}
P\left({\mathcal C}_k\right)  \lesssim
    &P\left(
        {\mathsf L}_{k-m}\ge 0, \cdots, {\mathsf L}_{k}\ge 0
        \right) + \\
    &\sum_{i=1}^m P\left(\begin{aligned}
        {\mathsf L}_{k-m}\ge 0, \cdots, {\mathsf L}_{k-i}\ge 0,
         -\ln \left( \beta_i \right) \le {{\mathsf L}_{k-i+1}}<0 \end{aligned}
        \right) + \\
    &P\left(
        \ln \alpha \le {{\mathsf L}_{k-m}}< 0
        \right),
\end{aligned}
\end{equation}
where
\begin{equation}\label{EqAlphaSCL8}
\alpha = \frac{\underset{{\bf v}_{k-m+1}^k \in \left\{0,1\right\}^{m}}{\min} \left\{
        P_{v_{k-m} = 1, {\bf v}_{k-m+1}^k }
    \right\}}{P_1}
\end{equation}
and
\begin{equation}
\beta_i = \left\{
\begin{aligned}
    &{Le^{L_{k-m}}} - 1 &,{\rm if}~i = m,\\
    &\frac{Le^{L_{k-m}}}{\prod_{q=i}^{m-1}{\left(e^{ - {L_{k-q}}} + 1\right)}} - 1 &,{\rm if}~i\in[\![m-1]\!].
\end{aligned}
\right.
\end{equation}
\end{proposition}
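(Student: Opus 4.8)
The plan is to run the same three-step template that proves Propositions~\ref{PropositionSCL2PCkUpperBound} and~\ref{PropositionSCL4PCkUpperBound}, now for a general list size $L=2^m$ with $m\ge 3$, starting from the $(m+2)$-part decomposition in~\eqref{EqSCL8mADD2Parts}: the all-nonnegative part $\{{\mathsf L}_{k-m}\ge 0,\cdots,{\mathsf L}_k\ge 0\}$, the $m$ middle parts indexed by $i\in[\![m]\!]$ (first sign change at position $k-i+1$), and the last part $\{{\mathsf L}_{k-m}<0\}$. Each summand of~\eqref{EqSCL8mADD2Parts} is bounded separately, and the three resulting families of terms are exactly the three lines of~\eqref{EqPCkSCL8UpperBound}.

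For the first summand I would simply drop the event ${\mathcal C}_k$, giving $P({\mathsf L}_{k-m}\ge 0,\cdots,{\mathsf L}_k\ge 0,{\mathcal C}_k)\le P({\mathsf L}_{k-m}\ge 0,\cdots,{\mathsf L}_k\ge 0)$; unlike the $L=2$ case, where iterating Lemma~\ref{LemmaProperty2} forces ${\mathcal C}_k$ to hold on this region, for $m\ge 2$ this is a genuine inequality since ${\mathcal C}_k$ can fail there. For each middle summand $i$, the key move uses that ${\mathsf L}_{k-m}\ge 0$ holds throughout, so Conjecture~\ref{ConjectureSCL8} applies and Lemma~\ref{LemmaP0L8largerMINP1paths} gives ${\mathcal C}_k\lesssim\{P_{{\bf 0}_{k-m}^k}\ge P_1/L\}$. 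Since the trailing conditional factors $P(v_j=0|\cdots)=(e^{-{\mathsf L}_j}+1)^{-1}\le 1$ for $j=k-i+2,\cdots,k$ only shrink the path metric, $\{P_{{\bf 0}_{k-m}^k}\ge P_1/L\}\subseteq\{P_{{\bf 0}_{k-m}^{k-i+1}}\ge P_1/L\}$, which involves only ${\mathsf L}_{k-m},\cdots,{\mathsf L}_{k-i+1}$. Using $P(v_j=0|\cdots)=(e^{-{\mathsf L}_j}+1)^{-1}$ and $P_1=(e^{{\mathsf L}_{k-m}}+1)^{-1}$ and clearing denominators reduces $\{P_{{\bf 0}_{k-m}^{k-i+1}}\ge P_1/L\}$ to $(e^{-{\mathsf L}_{k-i+1}}+1)\prod_{q=i}^{m-1}(e^{-{\mathsf L}_{k-q}}+1)\le L e^{{\mathsf L}_{k-m}}$, i.e. ${\mathsf L}_{k-i+1}\ge-\ln\beta_i$, where the empty product at $i=m$ recovers $\beta_m=L e^{{\mathsf L}_{k-m}}-1$ and $i\le m-1$ recovers the other branch of $\beta_i$; intersecting with the region gives $P({\mathsf L}_{k-m}\ge 0,\cdots,{\mathsf L}_{k-i}\ge 0,-\ln\beta_i\le{\mathsf L}_{k-i+1}<0)$. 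For the last summand, Lemma~\ref{LemmaP0L8largerMINP1paths} together with $P_{{\bf 0}_{k-m}^k}\le P_0=(e^{-{\mathsf L}_{k-m}}+1)^{-1}$ gives ${\mathcal C}_k\subseteq\{P_0\ge\min_{{\bf v}_{k-m+1}^k}P_{v_{k-m}=1,{\bf v}_{k-m+1}^k}\}=\{P_0\ge\alpha P_1\}$ with $\alpha$ as in~\eqref{EqAlphaSCL8}, and clearing denominators turns this into ${\mathsf L}_{k-m}\ge\ln\alpha$; intersecting with $\{{\mathsf L}_{k-m}<0\}$ gives $P(\ln\alpha\le{\mathsf L}_{k-m}<0)$. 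Summing the $(m+2)$ bounds yields~\eqref{EqPCkSCL8UpperBound}.

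The main obstacle is bookkeeping rather than a new idea. One must be meticulous about which conditional LLR densities are available from DE/GA, namely only those taken along the all-zero prefix (conditioned on $\widehat{\bf v}_1^{j-1}={\bf 0}_1^{j-1}$); this is precisely why each middle summand must be routed through the truncated metric $P_{{\bf 0}_{k-m}^{k-i+1}}$ instead of the full $P_{{\bf 0}_{k-m}^k}$, a step that is a necessity rather than a convenience and must be justified by the monotonicity $P_{{\bf 0}_{k-m}^{k-i+1}}\ge P_{{\bf 0}_{k-m}^k}$. The other delicate point is carrying out the index algebra once for all $i$ and $m$ so that both cases of $\beta_i$, including the empty-product convention, come out correctly, with the $L=2$ and $L=4$ propositions serving as consistency checks. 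Finally, because Conjecture~\ref{ConjectureSCL8} is invoked to replace $\min_{{\bf v}_{k-m+1}^k}P_{v_{k-m}=1,{\bf v}_{k-m+1}^k}$ by $P_1/L$ on $\{{\mathsf L}_{k-m}\ge 0\}$, the result is only an approximate upper bound, consistent with the $\lesssim$ in the statement.
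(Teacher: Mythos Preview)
Your proposal is correct and follows essentially the same three-part template as the paper's proof: dropping ${\mathcal C}_k$ on the all-nonnegative region, routing each middle summand through Lemma~\ref{LemmaP0L8largerMINP1paths} and Conjecture~\ref{ConjectureSCL8} and then relaxing $P_{{\bf 0}_{k-m}^k}$ to the truncated metric $P_{{\bf 0}_{k-m}^{k-i+1}}$ before solving for ${\mathsf L}_{k-i+1}\ge -\ln\beta_i$, and handling $\{{\mathsf L}_{k-m}<0\}$ via $P_{{\bf 0}_{k-m}^k}\le P_0$ exactly as in~\eqref{EqSCL4Part4Bound}. The only cosmetic difference is that you invoke Conjecture~\ref{ConjectureSCL8} before the truncation step whereas the paper (cf.~\eqref{EqSCL8Part2Transform}) truncates first and approximates second, which is immaterial to the outcome.
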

\begin{proof}
For $P\left({\mathsf L}_{k-m}\ge 0, \cdots, {\mathsf L}_{k}\ge 0, {\mathcal C}_k\right)$ and $P\left({{\mathsf L}_{k-m}}< 0, {\mathcal C}_k\right)$,
the proof of Proposition \ref{PropositionSCL8PCkUpperBound} is similar to that of Proposition \ref{PropositionSCL4PCkUpperBound} and we have
\begin{equation}
P\left({\mathsf L}_{k-m}\ge 0, \cdots, {\mathsf L}_{k}\ge 0, {\mathcal C}_k\right) \le P\left({\mathsf L}_{k-m}\ge 0, \cdots, {\mathsf L}_{k}\ge 0\right)
\end{equation}
and
\begin{equation}
P\left({{\mathsf L}_{k-m}}< 0, {\mathcal C}_k\right) \le P\left(\ln \alpha \le {{\mathsf L}_{k-m}}< 0\right).
\end{equation}

Then, according to Lemma \ref{LemmaP0L8largerMINP1paths} and Conjecture \ref{ConjectureSCL8}, we derive the approximate result of the second part in \eqref{EqPCkSCL8UpperBound} as follows.
\begin{equation}\label{EqSCL8Part2Transform}
\begin{aligned}
&P\left({\mathsf L}_{k-m}\ge 0, \cdots, {\mathsf L}_{k-i}\ge 0, {{\mathsf L}_{k-i+1}}<0 , {\mathcal C}_k\right) \\
&\le P\left(
    \begin{aligned}
    {\mathsf L}_{k-m}\ge 0, \cdots, {\mathsf L}_{k-i}\ge 0, {{\mathsf L}_{k-i+1}}<0,
    P_{{\bf v}_{k-m}^k = {\bf 0}_{k-m}^k} \ge \underset{{\bf v}_{k-m+1}^k \in \left\{0,1\right\}^{m}}{\min} \left\{
        P_{v_{k-m} = 1, {\bf v}_{k-m+1}^k }
    \right\}
    \end{aligned}
    \right) \\
&\le P\left(
    \begin{aligned}
    {\mathsf L}_{k-m}\ge 0, \cdots, {\mathsf L}_{k-i}\ge 0, {{\mathsf L}_{k-i+1}}<0,
    P_{{\bf v}_{k-m}^{k-i+1} = {\bf 0}_{k-m}^{k-i+1}} \ge \underset{{\bf v}_{k-m+1}^k \in \left\{0,1\right\}^{m}}{\min} \left\{
        P_{v_{k-m} = 1, {\bf v}_{k-m+1}^k }
    \right\}
    \end{aligned}
    \right) \\
&\approx P\left(
    \begin{aligned}
    {\mathsf L}_{k-m}\ge 0, \cdots, {\mathsf L}_{k-i}\ge 0, {{\mathsf L}_{k-i+1}}<0,
    P_{{\bf v}_{k-m}^{k-i+1} = {\bf 0}_{k-m}^{k-i+1}} \ge \frac{P_1}{L}
    \end{aligned}
    \right) \\
&= P\left(
        \begin{aligned}
        {\mathsf L}_{k-m}\ge 0, \cdots, {\mathsf L}_{k-i}\ge 0,
         -\ln \left(\beta_i\right) \le {{\mathsf L}_{k-i+1}}<0
        \end{aligned}
    \right).
\end{aligned}
\end{equation}
\end{proof}

By Proposition \ref{PropositionSCL2PCkUpperBound}, Proposition \ref{PropositionSCL4PCkUpperBound} and Proposition \ref{PropositionSCL8PCkUpperBound}, we can observe that \eqref{EqPCkSCL8UpperBound} is equivalent to \eqref{EqPCkSCL2UpperBound} and \eqref{EqPCkSCL4UpperBound}.
Thus, the approximate lower bound of SCL decoding is provided in Theorem \ref{TheoremSCLPCkUpperBound} as follows.
\begin{theorem}\label{TheoremSCLPCkUpperBound}
The approximate lower bound of SCL decoding with list size $L = 2^m$ is
\begin{equation}\label{EqTheoremLowerBound}
\begin{aligned}
P\left({\mathcal E}_{\tt SCL}\right)
&= P\left({\mathcal E}_{\tt PL}\right) + P\left({\mathcal E}_{\tt PS}\right) \\
&\gtrsim \sum\limits_{k = m+1}^{K} P\left({\mathcal S}_{k}\right) + P_{\tt ML} \\
&\gtrsim \sum\limits_{k = m+1}^{K}
\left(
\left(1 - P_{\tt UB}\left({\mathcal C}_k\right)\right)
\prod_{j=1}^{k-m-1}{P\left({\mathsf L}_j \ge 0\right)}
\right)
+ P_{\tt ML},
\end{aligned}
\end{equation}
where $P_{\tt UB}\left({\mathcal C}_k\right)$ is the upper bound of $P\left({\mathcal C}_k\right)$ calculated by \eqref{EqPCkSCL8UpperBound}.
\end{theorem}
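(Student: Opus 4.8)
The plan is simply to chain the three ingredients already assembled in Sections III and IV: the decomposition in Theorem~\ref{TheoremLowerBoundOutline}, the identity for $P({\mathcal S}_k)$ in \eqref{EqPSkLowerBound}, and the upper bound on $P({\mathcal C}_k)$ furnished by Propositions~\ref{PropositionSCL2PCkUpperBound}--\ref{PropositionSCL8PCkUpperBound}. No new estimate is needed; the content of the theorem is that these pieces fit together into a single closed-form expression valid for every list size $L = 2^m$.

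First I would invoke Theorem~\ref{TheoremLowerBoundOutline} to obtain $P({\mathcal E}_{\tt SCL}) = P({\mathcal E}_{\tt PL}) + P({\mathcal E}_{\tt PS}) \gtrsim \sum_{k=m+1}^{K} P({\mathcal S}_k) + P_{\tt ML}$, which is already the first ``$\gtrsim$'' line of \eqref{EqTheoremLowerBound}; the approximate symbol here only carries the replacement $P({\mathcal E}_{\tt PS}) \approx P_{\tt ML}$ from \eqref{EqPpsML}. Next I would expand each $P({\mathcal S}_k)$ through step $(f)$ of \eqref{EqPSkLowerBound}, namely $P({\mathcal S}_k) = (1 - P({\mathcal C}_k)) \prod_{j=1}^{k-m-1} \int_{0}^{\infty} p(l_j)\, dl_j = (1 - P({\mathcal C}_k)) \prod_{j=1}^{k-m-1} P({\mathsf L}_j \ge 0)$, and then substitute $P({\mathcal C}_k) \le P_{\tt UB}({\mathcal C}_k)$, where $P_{\tt UB}({\mathcal C}_k)$ is the right-hand side of \eqref{EqPCkSCL8UpperBound}; I would first recall the observation made just before the theorem that this single formula reduces to \eqref{EqPCkSCL2UpperBound} for $L=2$ and to \eqref{EqPCkSCL4UpperBound} for $L=4$, so one expression covers all cases. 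Since the factor $\prod_{j=1}^{k-m-1} P({\mathsf L}_j \ge 0)$ is non-negative, the inequality $1 - P({\mathcal C}_k) \ge 1 - P_{\tt UB}({\mathcal C}_k)$ propagates to $P({\mathcal S}_k) \gtrsim (1 - P_{\tt UB}({\mathcal C}_k)) \prod_{j=1}^{k-m-1} P({\mathsf L}_j \ge 0)$, and summing over $k = m+1, \dots, K$ yields the second ``$\gtrsim$'' line, completing the argument.

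The only delicate point --- the ``main obstacle,'' such as it is --- is bookkeeping the exact meaning of ``$\gtrsim$'': the bound $P({\mathcal C}_k) \le P_{\tt UB}({\mathcal C}_k)$ is an honest inequality for $L \le 2$ (Propositions~\ref{PropositionModifiedSCBound} and \ref{PropositionSCL2PCkUpperBound}), but becomes approximate for $L \ge 4$ because Conjectures~\ref{ConjectureSCL4} and \ref{ConjectureSCL8} replace the minimum sibling path metric $\min_{{\bf v}_{k-m+1}^k} P_{v_{k-m}=1,{\bf v}_{k-m+1}^k}$ by $P_1/L$; I would flag this explicitly so that the ``approximate lower bound'' terminology is unambiguous. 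I would also note that whenever $1 - P_{\tt UB}({\mathcal C}_k) < 0$ the corresponding summand is to be read as $0$, which is consistent since trivially $P({\mathcal S}_k) \ge 0$; this keeps the bound well defined in the low-SNR regime where $P_{\tt UB}({\mathcal C}_k)$ may exceed one. With these remarks in place, the chain of displays above is the entire proof.
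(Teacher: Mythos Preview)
Your proposal is correct and matches the paper's own argument, which is left implicit: the theorem is stated immediately after the observation that \eqref{EqPCkSCL8UpperBound} subsumes \eqref{EqPCkSCL2UpperBound} and \eqref{EqPCkSCL4UpperBound}, with the proof understood to be exactly the chaining of Theorem~\ref{TheoremLowerBoundOutline}, identity \eqref{EqPSkLowerBound}$(f)$, and Propositions~\ref{PropositionSCL2PCkUpperBound}--\ref{PropositionSCL8PCkUpperBound} that you spell out. Your additional bookkeeping remarks about the source of the ``$\gtrsim$'' and the clipping at zero are not in the paper but are sensible clarifications.
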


\begin{remark}
Parameter $\alpha$ affects the calculation result of $P_{\tt UB}\left({\mathcal C}_k\right)$. Generally, due to \eqref{EqAlphaSCL8}, $\alpha$ is assigned with a small number. We also can obtain an upper bound of $P\left(\ln \alpha \le {{\mathsf L}_{k-m}}< 0 \right)$ by discarding $\alpha$, i.e.,
\begin{equation}
P\left(\ln \alpha \le {{\mathsf L}_{k-m}}< 0 \right) \le P\left( {{\mathsf L}_{k-m}}<0 \right).
\end{equation}
\end{remark}

\subsection{Calculation Method Based on GA Algorithm for AWGN Channel}

For AWGN channel, the ML performance in the high SNR region is approximated as the approximate union bound \cite{LinShuBook}, i.e.,
\begin{equation}\label{EqUnionBound}
P_{\tt ML} \approx {{A_{d_{\min}}}Q\left( {\sqrt {\frac{{2{d_{\min}}{E_s}}}{{{N_0}}}} } \right)},
\end{equation}
where $d_{\min}$ is the minimum Hamming weight of the polar code, ${A_{d_{\min}}}$ is the number of the codewords with $d_{\min}$, $E_s$ is the energy of the transmitted signal, $N_0$ is the one-sided power spectral density of AWGN and
\begin{equation}\label{Eq_Q_function}
Q(x)=\frac{1}{{\sqrt {2\pi }  }}\int_x^\infty  {{e^{ - \frac{{{t^2}}}{2}}}dt}
\end{equation}
is the probability that a Gaussian random variable with zero mean and unit variance exceeds the value $x$.

\begin{algorithm}[t]
\setlength{\abovecaptionskip}{0.cm}
\setlength{\belowcaptionskip}{-0.cm}
\caption{$P_{\tt LB}\left({\mathcal E}_{\tt SCL}\right) = {\tt LB\_SCL} \left(N, K, L, {\mathcal A}, \frac{E_s}{N_0} \right)$}\label{AlgorithmSCLLowerBound}
\KwIn {The code length $N$, the information length $K$, the list size $L$, the information set ${\mathcal A}$ and the SNR $\frac{E_s}{N_0}$;}
\KwOut {The lower bound $P_{\tt LB}\left({\mathcal E}_{\tt SCL}\right)$ of SCL decoding;}

Obtain ${\bf v}_1^K$ with $v_k = u_{a_k}$, $k \in [\![K]\!]$, where ${\mathcal A} = \left\{a_1,a_2,\cdots,a_K\right\}$ with $a_1 < a_2 < \cdots < a_K$\;

Calculate $P_{\tt ML}$ by \eqref{EqUnionBound}\;

Calculate $P_{\tt UB}\left({\mathcal C}_k\right)$ by \eqref{EqGAUpperBoundPart1}, \eqref{EqGAUpperBoundPart2} and \eqref{EqGAUpperBoundPart3}\;

Calculate $P_{\tt LB}\left({\mathcal E}_{\tt SCL}\right)$ by \eqref{EqPLBESCL}\;

\Return $P_{\tt LB}\left({\mathcal E}_{\tt SCL}\right)$\;
\end{algorithm}

In GA algorithm, assuming ${\bf v}_1^K = {\bf 0}_1^K$ and ${\bf {\widehat v}}_1^{k-1} = {\bf 0}_1^{k-1}$,
$W_N^{(a_k)}$ is regarded as a BI-AWGN channel and its equivalent noise variance is $\sigma_{a_k}^2$, $k \in [\![K]\!]$. Hence, ${\mathsf L}_k$ obeys Gaussian distribution with the mean $\mu_{{\mathsf L}_k} = \frac{2}{\sigma_{a_k}^2}$ and the variance $\sigma_{{\mathsf L}_k}^2 = \frac{4}{\sigma_{a_k}^2}$
and its probability density function is
\begin{equation}
p\left(l_k\right) = p\left(\left.l_k\right|{\bf {\widehat v}}_1^{k-1} = {\bf 0}_1^{k-1}\right) =
\frac{1}{{\sqrt {2\pi \sigma _{{{\mathsf L}_k}}^2} }}{e^{ - \frac{{{{\left( {{l_k} - {\mu _{{{\mathsf L}_k}}}} \right)}^2}}}{{2\sigma _{{{\mathsf L}_k}}^2}}}}.
\end{equation}

To calculate the lower bound of $P\left({\mathcal E}_{\tt SCL}\right)$, we first calculate the first part of \eqref{EqPCkSCL8UpperBound} as
\begin{equation}\label{EqGAUpperBoundPart1}
\begin{aligned}
P\left({\mathsf L}_{k-m}\ge 0, \cdots, {\mathsf L}_{k}\ge 0\right)
&= \prod_{j=0}^m{P\left(\left.{\mathsf L}_{k-j}\ge 0\right|{\mathsf L}_{k-m}\ge 0,\cdots,{\mathsf L}_{k-j-1}\ge 0\right)} \\
&=\prod_{j=0}^m{P\left({\mathsf L}_{k-j}\ge 0\right)} \\
&=\prod_{j=0}^m{Q\left(-\frac{\mu_{{\mathsf L}_{k-j}}}{ \sigma_{{\mathsf L}_{k-j}}}\right)}.
\end{aligned}
\end{equation}
Then, we calculate the second part of \eqref{EqPCkSCL8UpperBound} as
\begin{equation}\label{EqGAUpperBoundPart2}
\begin{aligned}
&\sum_{i=1}^m P\left({\mathsf L}_{k-m}\ge 0, \cdots, {\mathsf L}_{k-i}\ge 0, -\ln \left( \beta_i \right) \le {{\mathsf L}_{k-i+1}}<0 \right) \\
&=\sum_{i=1}^m \int\limits_{{l}_{k-m}\ge 0, \cdots, {l}_{k-i}\ge 0, -\ln \left( \beta_i \right) \le {{l}_{k-i+1}}<0} {p\left(l_{k-m},\cdots,l_{k-i+1}\right)}\partial {l}_{k-m}\cdots\partial {l}_{k-i}\partial {l}_{k-i+1} \\
&=\sum_{i=1}^m \int\limits_{{l}_{k-m}\ge 0, \cdots, {l}_{k-i}\ge 0} {p\left(l_{k-m},\cdots,{l}_{k-i}\right)}
\left(\int\limits_{-\ln \left( \beta_i \right)} ^ {0}{p\left(\left.{l}_{k-i+1}\right|l_{k-m},\cdots,{l}_{k-i}\right)}
\partial {l}_{k-i+1}\right)
\partial {l}_{k-m}\cdots\partial {l}_{k-i} \\
&=\sum_{i=1}^m \int\limits_{{l}_{k-m}\ge 0, \cdots, {l}_{k-i}\ge 0}
\left(
Q\left(\frac{-\ln \left( \beta_i \right) - \mu_{{\mathsf L}_{k-i+1}}}{ \sigma_{{\mathsf L}_{k-i+1}}}\right) -
Q\left(-\frac{\mu_{{\mathsf L}_{k-i+1}}}{ \sigma_{{\mathsf L}_{k-i+1}}}\right)
\right)
\prod_{j=i}^m{p\left(l_{k-j}\right)}
\partial {l}_{k-m}\cdots\partial {l}_{k-i}.
\end{aligned}
\end{equation}

Finally, the last part of \eqref{EqPCkSCL8UpperBound} is calculated as
\begin{equation}\label{EqGAUpperBoundPart3}
P\left(\ln \alpha \le {{\mathsf L}_{k-m}}< 0\right) =
Q\left(\frac{\ln \alpha - \mu_{{\mathsf L}_{k-m}}}{ \sigma_{{\mathsf L}_{k-m}}}\right) -
Q\left(-\frac{\mu_{{\mathsf L}_{k-m}}}{ \sigma_{{\mathsf L}_{k-m}}}\right).
\end{equation}

By \eqref{EqUnionBound}, \eqref{EqGAUpperBoundPart1},  \eqref{EqGAUpperBoundPart2} and \eqref{EqGAUpperBoundPart3}, the lower bound of $P\left({\mathcal E}_{\tt SCL}\right)$ is determined. The corresponding processes are described in Algorithm \ref{AlgorithmSCLLowerBound}.
In Algorithm \ref{AlgorithmSCLLowerBound}, we first obtain ${\bf v}_1^K$ with $v_k = u_{a_k}$ by ${\mathcal A}$. Then, $P_{\tt ML}$ is calculated by \eqref{EqUnionBound} and $P_{\tt UB}\left({\mathcal C}_k\right)$ is calculated by \eqref{EqGAUpperBoundPart1}, \eqref{EqGAUpperBoundPart2} and \eqref{EqGAUpperBoundPart3}.
Finally, the lower bound $P_{\tt LB}\left({\mathcal E}_{\tt SCL}\right)$ of SCL decoding is determined by
\begin{equation}\label{EqPLBESCL}
\begin{aligned}
P_{\tt LB}\left({\mathcal E}_{\tt SCL}\right)
=
\sum\limits_{k = m+1}^{K}
\left(
\left(1 - P_{\tt UB}\left({\mathcal C}_k\right)\right)
\prod_{j=1}^{k-m-1}{P\left({\mathsf L}_j \ge 0\right)}
\right)
+ P_{\tt ML}.
\end{aligned}
\end{equation}

\section{Construction for SCL Decoding Based on Lower Bound}

\begin{algorithm}[htbp]
\setlength{\abovecaptionskip}{0.cm}
\setlength{\belowcaptionskip}{-0.cm}
\caption{Bit-Swapping Algorithm}\label{AlgorithmConstruction}
\KwIn {The code length $N$, the information length $K$, the list size $L$ and the SNR $\frac{E_s}{N_0}$;}
\KwOut {The information set ${\mathcal A}$;}

$W_N^{(i)}$ is regarded as a BI-AWGN channel and the equivalent noise variance calculated by GA algorithm is $\sigma_i^2$, $i \in [\![N]\!]$\;

Initialize ${\mathcal A}$ by the MWD sequence \cite{arXivMWDsequence} and $P_{\tt LB}\left({\mathcal E}_{\tt SCL}\right) \leftarrow {\tt LB\_SCL} \left(N, K, L, {\mathcal A}, \frac{E_s}{N_0} \right)$\;

Divide the $N$ synthetic channels into $(n+1)$ subsets ${\mathcal B}_r, r = 0,1,\cdots,n, $ by \eqref{EqDividedSubsetBr}\;

Find $c$ making ${\mathcal A} \bigcap {\mathcal B}_c \ne \varnothing$ and ${\mathcal A} \bigcap {\mathcal B}_{c+1} = \varnothing$\;

Set $a \leftarrow c - 1$ and $b \leftarrow c$\;

\While{}
{
    ${\widetilde{\mathcal A}} \leftarrow {\mathcal A}$, ${\mathcal B} \leftarrow \bigcup_{r=a}^b{\mathcal B}_r$,
    $\sigma_{{\max}}^2 \leftarrow \infty$ and $\sigma_{{\min}}^2 \leftarrow 0$\;

    \While{$\sigma_{{\max}}^2 > \sigma_{{\min}}^2$}
    {
        ${\mathcal A}' \leftarrow {\widetilde{\mathcal A}}\bigcap {\mathcal{B}}$ and ${\mathcal F}' \leftarrow {\mathcal B} \setminus {\mathcal A}'$\;

        $i_{\max} \leftarrow {\arg \max_{i \in {\mathcal A}'} }{\left(\sigma_i^2\right)}$ and $\sigma_{{\max}}^2 \leftarrow \sigma_{i_{\max}}^2$\;

        $i_{\min} \leftarrow {\arg \min_{i \in {\mathcal F}'} }{\left(\sigma_i^2\right)}$ and $\sigma_{{\min}}^2 \leftarrow \sigma_{i_{\min}}^2$\;

        ${\widetilde{\mathcal A}} \leftarrow \left(\widetilde{\mathcal{A}}\setminus\left\{i_{\max}\right\} \right)\bigcup \left\{i_{\min}\right\}$\;

        ${\widetilde P}_{\tt LB}\left({\mathcal E}_{\tt SCL}\right) \leftarrow {\tt LB\_SCL} \left(N, K, L, {\widetilde{\mathcal A}}, \frac{E_s}{N_0} \right)$\;

        \If{${\widetilde P}_{\tt LB}\left({\mathcal E}_{\tt SCL}\right) < P_{\tt LB}\left({\mathcal E}_{\tt SCL}\right)$}
        {
            ${\mathcal A} \leftarrow {\widetilde{\mathcal A}}$ and $P_{\tt LB}\left({\mathcal E}_{\tt SCL}\right) \leftarrow {\widetilde P}_{\tt LB}\left({\mathcal E}_{\tt SCL}\right)$\;
        }
    }
    \If{$a = 0$ and $b = n$}
    {
        {\bf break}\;
    }
    \Else
    {
        $a \leftarrow \max{(a - 1, 0)}$ and $b \leftarrow \min{(b + 1, n)}$\;
    }
}
\Return ${\mathcal A}$\;
\end{algorithm}

In this section, we propose a BS algorithm for SCL decoding based on the derived lower bound.
The construction criterion of the construction method is minimizing the lower bound $P_{\tt LB}\left({\mathcal E}_{\tt SCL}\right)$, i.e.,
\begin{equation}\label{EqConstructionCriterion}
{\mathcal A} = \underset{{\mathcal A} \subset [\![N]\!]}{{\arg \min }}{P_{\tt LB}\left({\mathcal E}_{\tt SCL}\right)}.
\end{equation}
To find the optimum ${\mathcal A}$ with the least lower bound of SCL decoding, \eqref{EqConstructionCriterion} searches $C_N^K = \frac{N!}{K!\left(N-K\right)!}$ subsets in $[\![N]\!]$ with the cardinality $K$ and the search complexity is too high.
Inspired by the bit grouping reorder based MWD (BGR-MWD) construction algorithm in \cite{arXivMWDsequence}, we propose a BS algorithm described in Algorithm \ref{AlgorithmConstruction} to greedily search the information set in \eqref{EqConstructionCriterion}.

In Algorithm \ref{AlgorithmConstruction}, we first obtain an initialized ${\mathcal A}$ by the MWD sequence, which has the optimum $P\left({\mathcal E}_{\tt ML}\right)$ but the worse $P\left({\mathcal E}_{\tt PL}\right)$.
Then, to reduce $P\left({\mathcal E}_{\tt PL}\right)$ while remaining $P\left({\mathcal E}_{\tt ML}\right)$ unchanged as possible,
we define a swapping set ${\mathcal B}$ by the Hamming weight and swap the information bit and the frozen bit in ${\mathcal B}$ to gradually reduce $P_{\tt LB}\left({\mathcal E}_{\tt SCL}\right)$ to obtain the optimum ${\mathcal A}$.

Specifically, we first use the MWD sequence to obtain ${\mathcal A}$ in step 2.
Then, similar to the BGR-MWD algorithm, we divide the $N$ synthetic channels into $(n+1)$ subsets in step 3, i.e.,
\begin{equation}\label{EqDividedSubsetBr}
{\mathcal B}_r = \left\{\left.i\right|w({\bf g}_i) = 2^{n-r}, i \in [\![N]\!] \right\}, r = 0,1,\cdots,n,
\end{equation}
where $w({\bf g}_i)$ is the Hamming weight of the $i$-th row of the generator matrix $\bf G$.
Next, we initialize the swapping range $a$ and $b$ in steps 4 and 5, which is used to initialize the swapping set ${\mathcal B} \leftarrow \bigcup_{r=a}^b{\mathcal B}_r$.
In steps 6 to 19, we initialize ${\widetilde{\mathcal A}} \leftarrow {\mathcal A}$ and swap the information bit in ${\mathcal A}' \leftarrow {\widetilde{\mathcal{A}}}\bigcap {\mathcal{B}}$ and the frozen bit in ${\mathcal F}' \leftarrow {\mathcal B} \setminus {\mathcal A}'$ to gradually reduce $P_{\tt LB}\left({\mathcal E}_{\tt SCL}\right)$ in order to optimize ${\mathcal A}$. In steps 9 to 12, we swap the information bit $u_{i_{\max}}$ and the frozen bit $u_{i_{\min}}$ and update
${\widetilde{\mathcal A}} \leftarrow \left(\widetilde{\mathcal{A}}\setminus\left\{i_{\max}\right\} \right)\bigcup \left\{i_{\min}\right\}$, where $i_{\max} = {\arg \max_{i \in {\mathcal A}'} }{\left(\sigma_i^2\right)}$ and $i_{\min} = {\arg \min_{i \in {\mathcal F}'} }{\left(\sigma_i^2\right)}$.
Then, if $\widetilde{\mathcal{A}}$ has the less lower bound than ${\mathcal{A}}$, i.e., ${\widetilde P}_{\tt LB}\left({\mathcal E}_{\tt SCL}\right) < P_{\tt LB}\left({\mathcal E}_{\tt SCL}\right)$, update ${\mathcal A} \leftarrow {\widetilde{\mathcal A}}$ in steps 14 and 15.
If no information bit in ${\mathcal A}'$ has larger equivalent noise variance than the frozen bit in ${\mathcal F}'$, we enlarge the swapping range by
$a \leftarrow \max{(a - 1, 0)}$ and $b \leftarrow \min{(b + 1, n)}$ and update the swapping set ${\mathcal B} \leftarrow \bigcup_{r=a}^b{\mathcal B}_r$. Finally, when the swapping range can not be enlarged, i.e., $a = 0$ and $b = n$, we obtain the optimized ${\mathcal A}$.

\section{Performance Evaluation}

In this section, we first provide the modified upper bound of SC decoding. Then, the lower bound of SCL decoding is illustrated. Finally, the performance of polar codes with the BS algorithm is provided.

\subsection{Simulation Results of Modified Upper Bound of SC Decoding}

\begin{figure*}[t]
\setlength{\abovecaptionskip}{0.cm}
\setlength{\belowcaptionskip}{-0.cm}
  \centering{\includegraphics[scale=0.6]{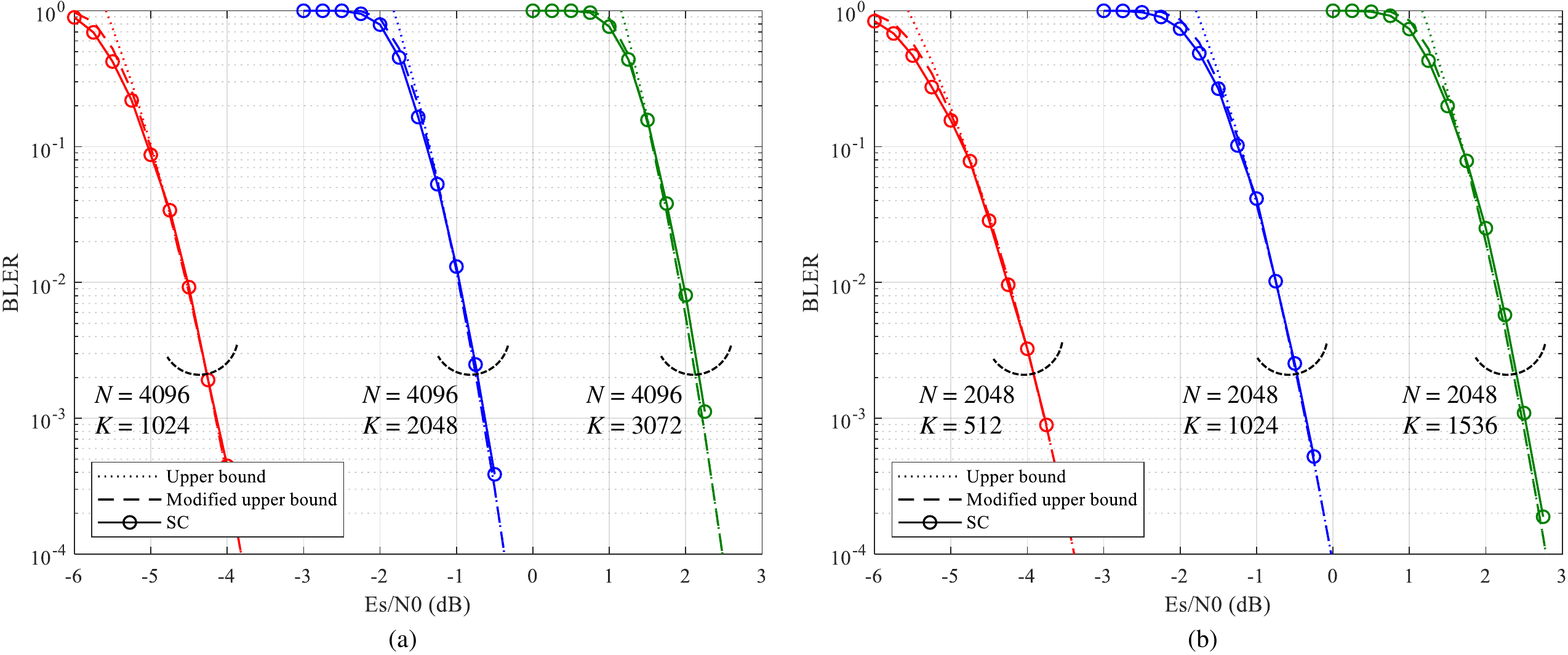}}
  \caption{The comparison between the modified upper bound and the upper bound of SC decoding with $R = 1/4$, $1/2$ and $3/4$. In Fig. \ref{FigBLERSC}(a) and Fig. \ref{FigBLERSC}(b), the improved GA algorithm is used to construct the polar codes with $N = 4096$ and $N = 2048$, respectively.}\label{FigBLERSC}
  \vspace{-0em}
\end{figure*}

Fig. \ref{FigBLERSC} shows the comparison between the modified upper bound \eqref{EqModifiedUpperBoundSC} and the upper bound \eqref{EqSCUpperBound} of SC decoding with $R = 1/4$, $1/2$ and $3/4$.
In Fig. \ref{FigBLERSC}(a) and Fig. \ref{FigBLERSC}(b), the improved GA algorithm \cite{GA_DAI} is used to construct polar codes with $N = 4096$ and $N = 2048$, respectively. In Fig. \ref{FigBLERSC}, we can observe that the proposed modified upper bound is tighter than the upper bound in the low SNR region and coincides with the upper bound in the medium to high SNR region.
The reason is that compared with the upper bound \eqref{EqSCUpperBound}, the modified upper bound in \eqref{EqModifiedUpperBoundSC} considers the influence of the prior information bits and the calculation process introduces the path metric of the prior information bits.
Moreover, since the path metric of the prior information bits is almost equal to 1 for the large SNR, the two bounds coincide with each other in the high SNR region.
Then, we observe that the modified upper bound is almost identical to the BLER performance of SC decoding in low to high SNR region. Thus, we can use the modified upper bound to evaluate the BLER performance of SC decoding without simulation.

\subsection{Simulation Results of Lower Bound of SCL Decoding}

\begin{figure*}[t]
\setlength{\abovecaptionskip}{0.cm}
\setlength{\belowcaptionskip}{-0.cm}
  \centering{\includegraphics[scale=0.6]{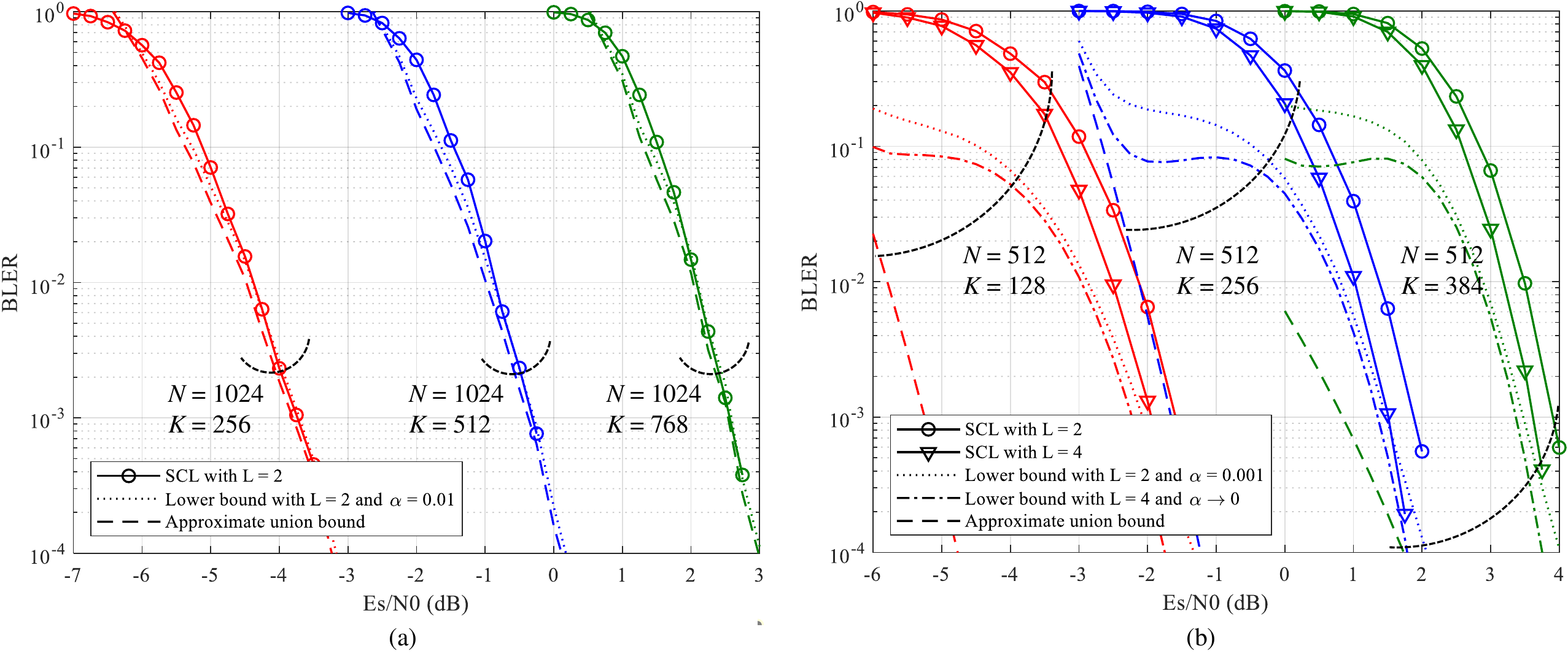}}
  \caption{The comparison among the lower bound of SCL decoding, the BLER performance of SCL decoding and the approximate union bound  of polar codes
  with $R = 1/4$, $1/2$ and $3/4$, where the polar codes with $N = 1024$ are constructed by the improved GA algorithm in Fig. \ref{FigBLERSCL2}(a) and the MWD sequence is used to construct the polar codes with $N = 512$ in Fig. \ref{FigBLERSCL2}(b).}\label{FigBLERSCL2}
  \vspace{-0em}
\end{figure*}

Fig. \ref{FigBLERSCL2} illustrates the comparison among the lower bound \eqref{EqTheoremLowerBound} of SCL decoding, the BLER performance of SCL decoding and the approximate union bound \eqref{EqUnionBound} of polar codes with $R = 1/4$, $1/2$ and $3/4$. In Fig. \ref{FigBLERSCL2}(a), the polar codes with $N = 1024$ are constructed by the improved GA algorithm \cite{GA_DAI}. In Fig. \ref{FigBLERSCL2}(b), the MWD sequence \cite{arXivMWDsequence} is used to construct the polar codes with $N = 512$.
We can observe in Fig. \ref{FigBLERSCL2}(a) that the BLER performance, the lower bound and the approximate union bound are close to each other and the curve of the lower bound is located between the other two curves.
Then, in the high SNR region, the lower bound coincides with the approximate union bound and the BLER performance, which means that the probability of the PL error event is less than the ML performance and the ML error event is dominant in the error event of SCL decoding.
With this, we provide an explanation for the phenomenon that the BLER performance of polar codes constructed by GA algorithm and decoded by SCL decoding with limited list size can approach the ML performance.
In Fig. \ref{FigBLERSCL2}(b), we observe that the BLER performance of SCL decoding gradually approaches the corresponding lower bound as the SNR increases.
Specifically, for $\left(512, 256\right)$ polar code at BLER $10^{-3}$, there is about $0.15$dB performance gap between the performance of SCL decoding with $L = 4$ and the corresponding lower bound.
Hence, the proposed lower bound of SCL decoding can be used to evaluate the BLER performance in the medium to high SNR region.
Then, we also observe that the performance gap between the lower bound and the approximate union bound is large, which means the PL error event is dominant in the error event of polar codes constructed by the MWD sequence under SCL decoding rather than the ML error event, though the MWD sequence has been proved to be the optimum construction method under ML decoding in \cite{arXivMWDsequence}.

\begin{figure*}[t]
\setlength{\abovecaptionskip}{0.cm}
\setlength{\belowcaptionskip}{-0.cm}
  \centering{\includegraphics[scale=0.6]{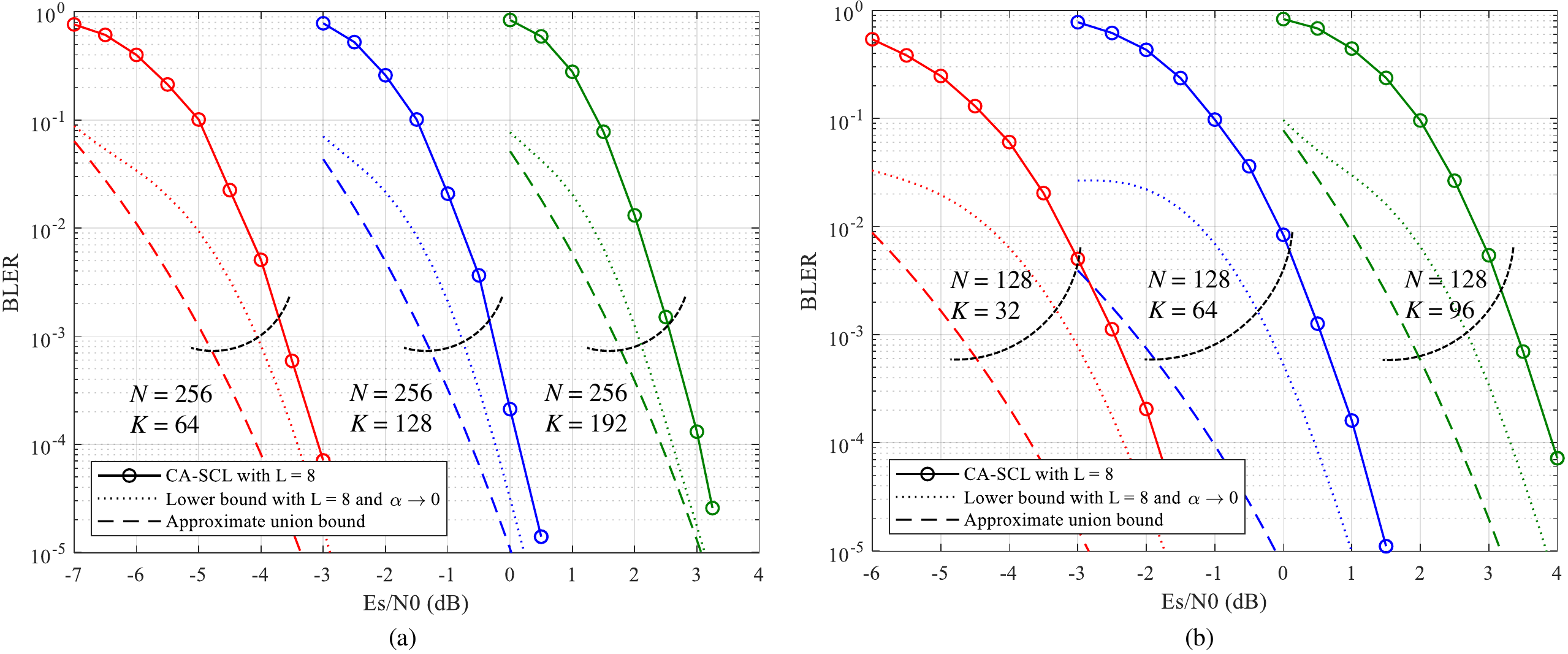}}
  \caption{The comparison among the lower bound of CA-SCL decoding, the BLER performance of CA-SCL decoding and the approximate union bound of CRC-polar concatenated codes with $R = 1/4$, $1/2$ and $3/4$, where the construction method is PW and the CRC lengths in Fig. \ref{FigBLERSCL8CRC}(a) and Fig. \ref{FigBLERSCL8CRC}(b) are 8 and 11, respectively.}\label{FigBLERSCL8CRC}
  \vspace{-0em}
\end{figure*}

Fig. \ref{FigBLERSCL8CRC} shows the comparison among the lower bound of CRC-aided SCL (CA-SCL) decoding, the BLER performance of CA-SCL decoding \cite{CASCL} and the approximate union bound of CRC-polar concatenated codes with $R = 1/4$, $1/2$ and $3/4$, where the construction method is PW.
In Fig. \ref{FigBLERSCL8CRC}(a), the code length is $N = 256$ and the CRC is the optimum 8-bit CRC in \cite{SphereMWD}. In Fig. \ref{FigBLERSCL8CRC}(b), we use the optimum 11-bit CRC in \cite{SphereMWD} for the CRC-polar concatenated codes with $N = 128$.
In Fig. \ref{FigBLERSCL8CRC}, we observe that the BLER performance approaches the lower bound in the medium to high SNR region and there is a large performance gap between the BLER performance and the approximate union bound.
Specifically, for $\left(128, 64\right)$ CRC-polar concatenated code with the optimum 11-bit CRC at BLER $10^{-4}$, the gap between the BLER performance and the lower bound is about $0.61$dB, which is less than the gap about $2.11$dB between the BLER performance and the approximate union bound.
Thus, the PL error event is dominant in the error event of CRC-polar concatenated codes with optimum CRC and
the proposed lower bound can be used to evaluate the BLER performance of CRC-polar concatenated codes under CA-SCL decoding.

\subsection{Simulation Results of Bit-Swapping Algorithm}

\begin{figure}[t]
\setlength{\abovecaptionskip}{0.cm}
\setlength{\belowcaptionskip}{-0.cm}
  \centering{\includegraphics[scale=0.6]{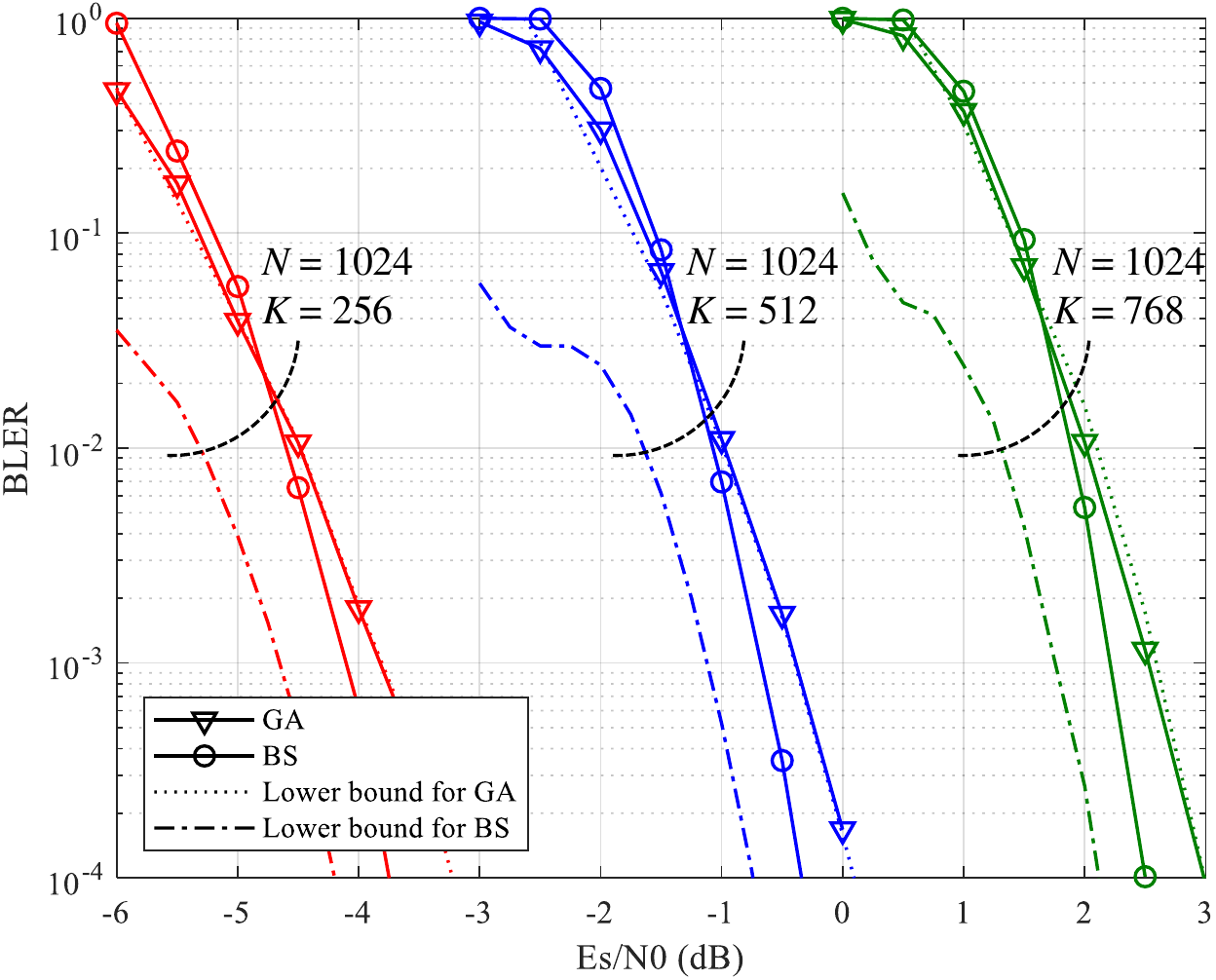}}
  \caption{The BLER performance of polar codes constructed by the proposed BS algorithm under SCL decoding with $L = 4$.}\label{FigN1024SCL4BS}
  \vspace{-0em}
\end{figure}

Fig. \ref{FigN1024SCL4BS} provides the BLER performance of polar codes constructed by the proposed BS algorithm under SCL decoding with $L = 4$.
In Fig. \ref{FigN1024SCL4BS}, we observe that compared with the GA algorithm, the polar codes constructed by the BS algorithm have about $0.32$dB, $0.29$ and $0.32$dB performance gain with $R = 1/4$, $R = 1/2$ and $R = 3/4$ at BLER $10^{-3}$, respectively.
Then, as the SNR increases, the BLER performance with BS algorithm gradually approaches the corresponding lower bound.
Thus, the proposed lower bound can be treated as a construction metric to optimize polar codes under SCL decoding.

\section{Conclusion}

In this paper, we analyze the BLER performance of polar codes under SCL decoding and propose the corresponding lower bound. We indicate that the error event of SCL decoding consists of the PL error event and the PS error event. With this, we use the ML performance to approximate the probability of the PS error event and
derive the lower bound of the probability of the PL error event to provide the lower bound of SCL decoding.
Based on the lower bound, we design the BS algorithm to construct polar codes under SCL decoding by reducing the lower bound greedily.
The simulation results show that the BLER performance of SCL decoding approaches the lower bound in the medium to high SNR region and the proposed BS algorithm can improve the BLER performance of polar codes under SCL decoding.

\bibliographystyle{IEEEtran}
\bibliography{IEEEabrv,myrefs}

\end{document}